\newtheorem{theorem}{Theorem}
\newtheorem{assumption}{Assumption}
\newtheorem{lemma}{Lemma}
\newtheorem{example}{Example}
\newtheorem{remark}{Remark}
\newtheorem{algorithm}{Algorithm}
\renewcommand{\thesection}{\arabic{section}}
\renewcommand{\theequation}{\arabic{section}.\arabic{equation}}
\renewcommand{\thelemma}{\arabic{section}.\arabic{lemma}}
\newcounter{bean}
\newcounter{beana}
\DeclareMathOperator{\argmin}{argmin}
\newcommand{\On}{\mathcal{O}}
\newcommand{\E}{\mathbb{E}}
\newcommand{\indep}{\perp\!\!\!\perp}
\title[Causal Inference in High-Dim GLM]{Causal Inference in High-Dimensional Generalized Linear Models with Binary Outcomes}
\author[J. Kong]{Jing~Kong$^{\dagger}$}
\address{$^{\dagger}$University of Southern California, University Park, Los Angeles, CA 90089, USA.}
\email{jingkong@usc.edu}
\begin{document}

\begin{abstract}
This paper proposes a debiased estimator for causal effects in high-dimensional generalized linear models with binary outcomes and general link functions. The estimator augments a regularized regression plug-in with weights computed from a convex optimization problem that approximately balances link-derivative–weighted covariates and controls variance; it does not rely on estimated propensity scores. Under standard conditions, the estimator is $\sqrt{n}$-consistent and asymptotically normal for dense linear contrasts and causal parameters. Simulation results show the superior performance of our approach in comparison to alternatives such as inverse propensity score estimators and double machine learning estimators in finite samples. In an application to the National Supported Work training data, our estimates and confidence intervals are close to the experimental benchmark.
\keywords{Causal Parameters, High-dimensional Models, Link Functions, Weighting}
\end{abstract}

\section{Introduction}\label{sec:intro}
\setcounter{equation}{0}
\setcounter{theorem}{0}
\setcounter{assumption}{0}
\setcounter{proposition}{0}
\setcounter{corollary}{0}
\setcounter{lemma}{0}
\setcounter{example}{0}
\setcounter{remark}{0}
Causal inference from observational data is central across economics, epidemiology, and related fields. Binary outcomes are common, and generalized linear models (GLMs) provide a convenient framework. This paper proposes a method to estimate causal parameters in high-dimensional GLMs with binary outcomes.

\subsection{Motivation}
In observational studies, an unconfoundedness assumption—treatment assignment is \emph{as good as random} conditional on covariates—is often invoked. To make this plausible, researchers include many covariates or flexible functions of them. When the number of covariates is large relative to the sample size, regularization method such as Lasso by \citet{tibshirani1996regression} is used, but shrinkage introduces bias and the sampling distribution of the original Lasso estimator is not generally tractable for inference.

For illustration, consider a single-covariate two-stage model:
\[Y_i \sim \text{Bernoulli}(g(\beta_{D}D_i+\beta_{X}X_i)), \quad D_i \sim \text{Bernoulli}(g^*(X_i)),\]
where $X_i \sim N(0,1)$ independently drawn across $i \in \{1,...,n\}$ for $n = 100$, and parameters $\beta_{D} = \beta_{X} = 1$, $g(x) = g^{*}(x) = \exp(x)/(1+\exp(x))$.
The respective regression plug-in estimates for a causal parameter: average treatment effect on the treated (ATET), using standard maximum likelihood estimates (MLE) and penalized MLE with Lasso penalty, are:
\[\hat{\tau}_{MLE} = \frac{1}{n_t}\sum_{\{i:D_i = 1\}} g(\hat{\beta}_{D,MLE}+\hat{\beta}_{X, MLE}X_i)-g(\hat{\beta}_{X, MLE}X_i),\]
\[\hat{\tau}_{Lasso} = \frac{1}{n_t}\sum_{\{i:D_i = 1\}} g(\hat{\beta}_{D,Lasso}+\hat{\beta}_{X, Lasso}X_i)-g(\hat{\beta}_{X, Lasso}X_i).\]
Figure \ref{fig:introduction} displays histograms of the estimated ATET using MLE and penalized MLE with Lasso each based on 500 replications; the dashed black line marks the true value. The two panels show that the MLE estimator is centred close to the truth and approximates a normal distribution, whereas the Lasso-based estimator exhibits bias and noticeable deviations from normality. This pattern is consistent with the fact that the regression plug-in with MLE performs well when the outcome model is correctly specified.
\begin{figure}
    \begin{center}
        \begin{minipage}[b]{0.45\textwidth}
        \includegraphics[scale=0.17]{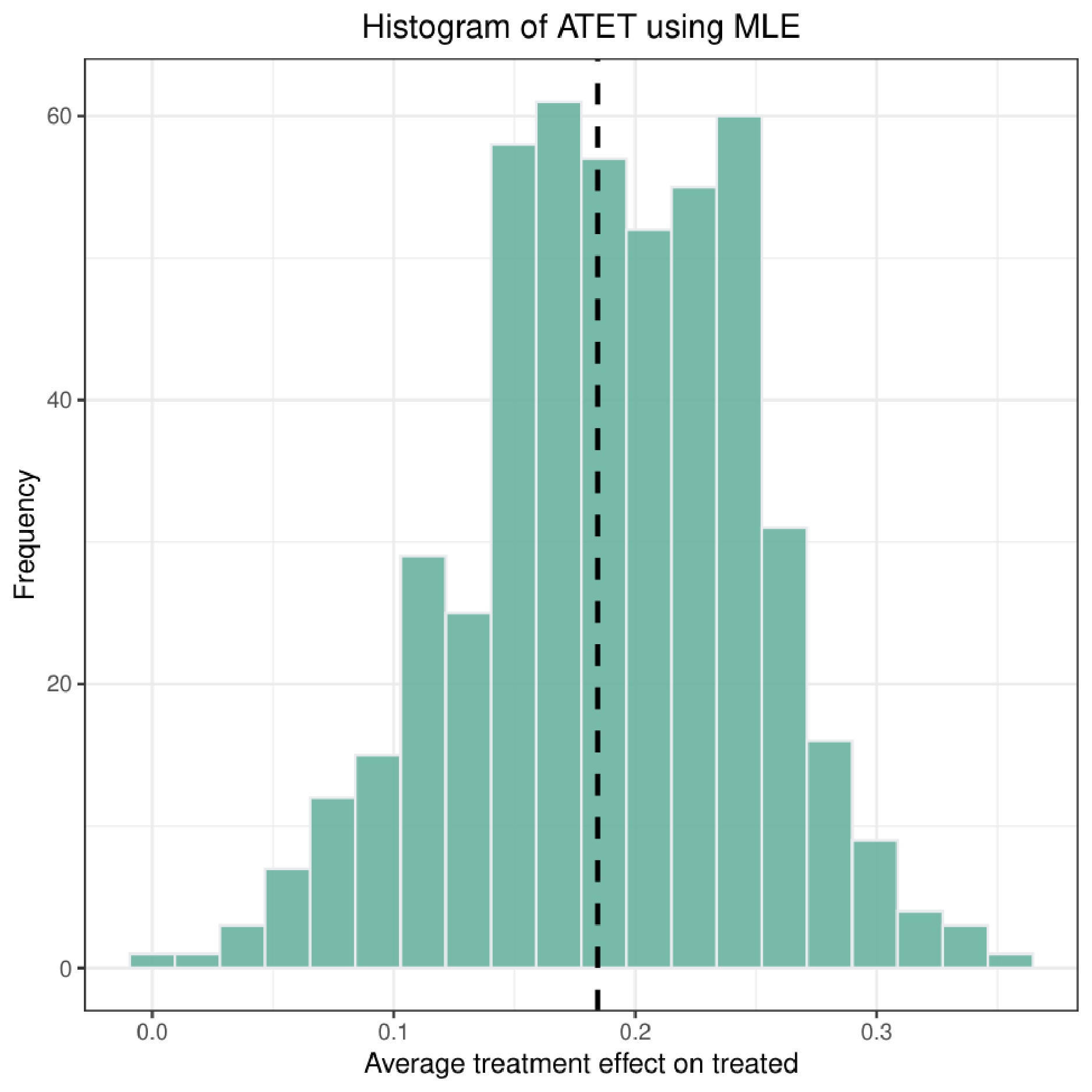}
    \end{minipage}
    \hspace{5mm}
    \begin{minipage}[b]{0.45\textwidth}
        \includegraphics[scale=0.17]{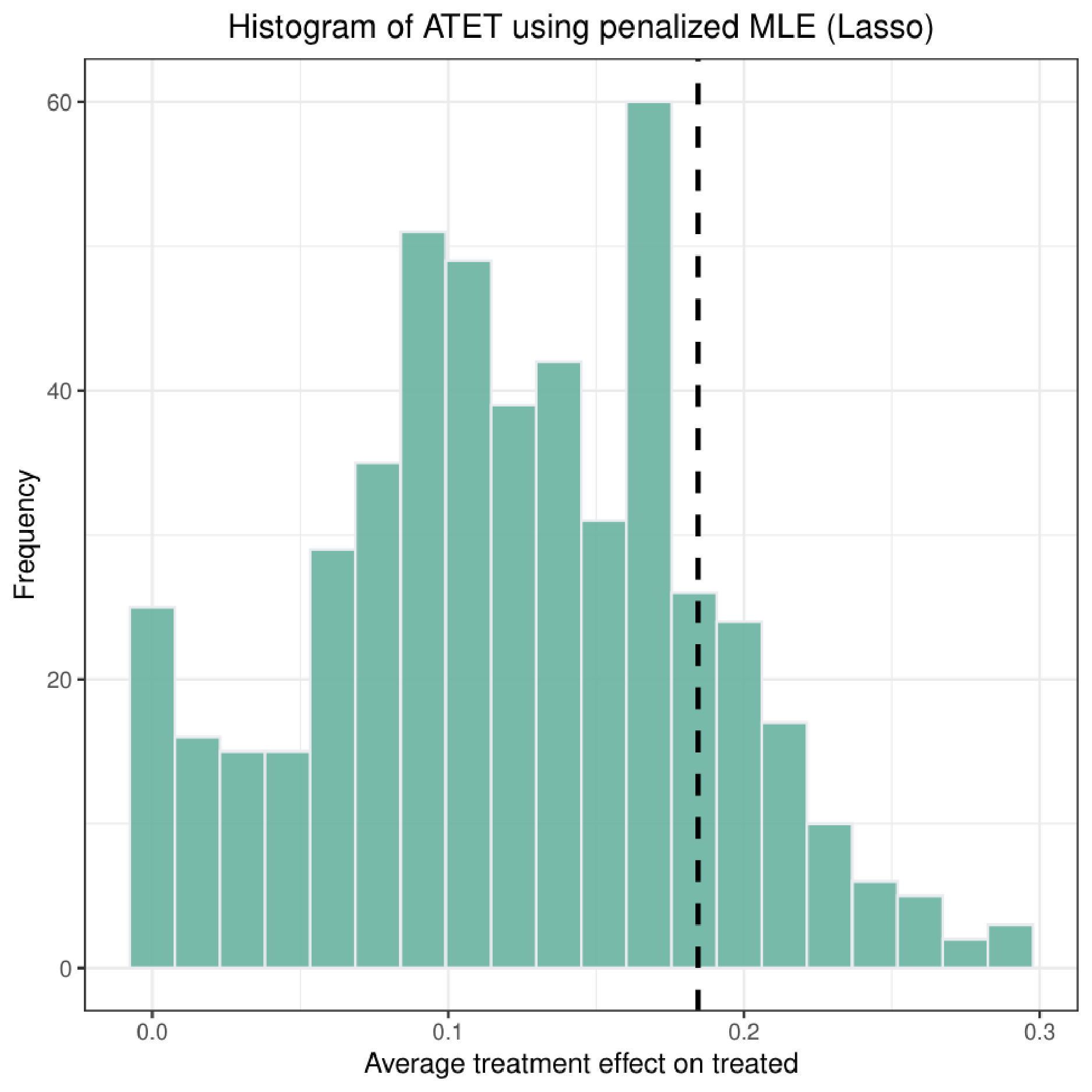}
    \end{minipage}
    \end{center}
        \caption{Histograms of the estimated ATET using MLE and Penalized MLE (Lasso)}
    \label{fig:introduction}
\end{figure}
This example underscores the need to mitigate bias for causal targets when regularization is necessary.

\subsection{Contributions and Related Work}
This paper proposes a debiased estimator for causal effects in high-dimensional generalized linear models (GLMs) with binary outcomes that does not rely on estimated propensity score. The estimator starts from a regularized GLM fit for the outcome model and then solves a single convex optimization to construct weights. The optimization program approximately balances covariate functions implied by the link derivative and simultaneously control the variance of the final estimator.

The key principle is a bias–variance trade-off. The GLM link function determines, through its derivative, which functions must be matched between treated and controls to remove the leading first-stage bias. The program therefore enforces approximate balance in these derivative-weighted covariates. At the same time, GLMs imply heteroskedastic residuals, so the objective includes an explicit penalty on the variance. In the special case of the identity link, corresponding to the linear model, the derivative is constant and the requirement reduces to equalizing treated and control covariate means, which coincides with standard covariate balance.

We establish two sets of asymptotic results: (i) for linear functionals $\xi^\top\beta_c$ (which may be dense), and (ii) for the causal parameter itself (a sample-average target rather than a fixed linear contrast). In both cases, the balance constraint makes the plug-in bias vanish faster than root-$n$, so the estimator is $\sqrt n$-consistent and asymptotically normal. We also provide a feasible variance estimator from the same optimization. 
In simulations (main text and supplement), the estimator performs better than inverse probability weighting (IPW), double/debiased machine learning (DML), automatic debiased machine learning (AML), and linear approximate residual balancing (ARB). The improvements are largest when the propensity model is complex. Confidence interval coverage approaches the nominal level as $n$ grows. The full-sample implementation has the lowest mean-squared error and the shortest intervals among our variants. In an empirical re-analysis of the National Supported Work (NSW) dataset, our estimate is closest to the experimental benchmark, and its 95\% confidence interval almost coincides with the randomized-experiment interval.

This paper contributes to three strands. First, it relates to doubly robust and orthogonal estimation in high dimensions. Regularized outcome regression can induce first-order bias in ATEs, as shown by \citet{belloni2014high}. Doubly robust and orthogonal estimators combine outcome and propensity models to obtain $\sqrt n$ inference under first-stage rate conditions, as developed by \citet{farrell2015robust}, \citet{belloni2017program}, and \citet{chernozhukov2018double}, and are related to debiased Lasso, as in \citet{van2014asymptotically} and \citet{zhang2014confidence}. We debias without inverting an estimated propensity score by imposing GLM-specific balance and minimizing variance in a single optimization problem.

Second, it connects to balancing-based debiasing and minimax linear estimation. Exact balancing via inverse-probability tilting or augmented scores, as in \citet{graham2012inverse} and \citet{graham2016efficient}, relies on inverse propensities and can be unstable in high dimensions. The ARB estimator of \citet{athey2018approximate} addresses linear models by balancing residualized covariates to remove regularization bias. We extend this idea to GLMs, where the relevant moments are link-derivative-weighted and the variance is heteroskedastic. Our optimization has a minimax-MSE interpretation and is related to augmented minimax linear estimation in \citet{hirshberg2021augmented}. Compared with ARB, we target GLM moments, use a different program, and obtain a feasible variance estimator. Riesz-based AML estimators of \citet{chernozhukov2022automatic} and \citet{chernozhukov2024automatic} avoid explicit propensities by learning the Riesz representer; our approach instead solves directly for weights that achieve debiasing in the GLM metric and explicitly optimizes variance, improving finite-sample stability.

Finally, the paper relates to inference in high-dimensional GLMs and to post-selection inference. \citet{cai2023statistical} provide debiased inference for single coefficients under general links using a two-step weighted correction. We allow dense linear contrasts and supply a single weighting scheme that controls both bias and variance. \citet{belloni2016post} and \citet{belloni2018uniformly} develop post-selection and uniform inference for linear models, including dense contrasts. Their procedures estimate both the outcome and an auxiliary propensity-type regression (double selection); our debiasing focuses on the outcome regression and an optimization that enforces balance.

\subsection{Organization and Notations}
The remainder of the paper is organized as follows. Section 2 formulates the problem and describes the estimator and algorithm. Section 3 develops the asymptotic theory for dense linear contrasts and causal parameters in a high-dimensional framework, deriving convergence rates and limit distributions. Section 4 reports simulation results. Section 5 presents an empirical application. Section 6 concludes. Proofs and additional simulation results are provided in the supplementary materials.

Throughout, for a vector $\boldsymbol{b} = (b_1,...,b_n)^\top \in \mathbb{R}^n$, we define the $\ell_p$ norm by $||\boldsymbol{b}||_p = \left(\sum_{i=1}^n|b_i|^p\right)^{1/p}$, and the $\ell_\infty$ norm by $||\boldsymbol{b}||_\infty = \max_{1\leq i \leq n} |b_i|$. For a matrix $A \in \mathbb{R}^{p\times q}$, $\sigma_i$ stands for the $i$-th largest singular value of $A$ and $\sigma_{\max}(A) = \sigma_1(A)$, $\sigma_{\min}(A) = \sigma_{\min\{p,q\}}(A)$.
For a smooth function $f(x)$ defined on $\mathbb{R}$, we define $f'(x) = df(x)/dx$ and $f''(x) = d^2f(x)/dx^2$. We define $\phi(x)$ and $\Phi(x)$ as the density function and cdf of the standard Gaussian random variable, respectively. We denote $\stackrel{p}{\rightarrow}$, $\stackrel{d}{\rightarrow}$ as convergence in probability and in distribution respectively.
For positive sequences $\{a_n\}$ and $\{b_n\}$, we write $a_n = o(b_n)$, $a_n \ll b_n$ or $b_n \gg a_n$ if $\lim_n a_n/b_n = 0$, and write $a_n = \On(b_n)$, $a_n \lesssim b_n$ or $b_n \gtrsim a_n$ if there exists a constant $C$ such that $a_n \leq Cb_n$ for all $n$. We write $a_n \simeq b_n$ if $a_n \lesssim b_n$ and $a_n \gtrsim b_n$.
For random variable sequences $\{X_n\}$ and positive sequences $\{a_n\}$, we write $X_n = o_p(a_n)$ if $X_n/a_n \stackrel{p}{\rightarrow} 0$, and write $X_n = O_p(a_n)$ if $\limsup_n P[|X_n/a_n| \geq M] \to 0$.

\section{Causal inference under high-dimensional GLM}
\setcounter{equation}{0}
\setcounter{theorem}{0}
\setcounter{assumption}{0}
\setcounter{proposition}{0}
\setcounter{corollary}{0}
\setcounter{lemma}{0}
\setcounter{example}{0}
\setcounter{remark}{0}
\subsection{Problem Formulation}
We estimate the average treatment effect on the treated (ATET) under the potential outcomes framework.\footnote{The average treatment effect (ATE) and the average treatment effect on the control (ATEC) can be handled analogously; they require modelling the missing potential outcome for the corresponding group.} For units $i=1,\ldots,n$, let $Y_i(1),Y_i(0)\in\{0,1\}$ denote potential outcomes, $D_i\in\{0,1\}$ the treatment indicator, $Y_i^{\text{obs}}=Y_i(D_i)$ the observed outcome, and $X_i\in\mathbb{R}^p$ a (possibly high-dimensional) covariate vector. We assume unconfoundedness for the control outcome and a GLM for $Y_i(0)\mid X_i$.

\begin{assumption}
    $D_i \indep Y_i(0)|X_i$ for all $i \in \{1,...,n\}$.
\end{assumption}

\begin{assumption}
    The response function conditional on covariates satisfies 
        \begin{equation}
            Y_i(0)|X_i \sim \text{Bernoulli}(g(X_i^\top \beta_c)),
        \end{equation}
    where $g: \mathbb{R} \to (0,1)$ is a known link function and $\beta_c \in \mathbb{R}^p$. In other words, we have $Y_i(0) = g(X_i^\top \beta_c)+\varepsilon_i$ where $\E(\varepsilon_i|X_i)=0$ and $var(\varepsilon_i|X_i)=g(X_i^\top \beta_c)(1-g(X_i^\top \beta_c))$ for all $i \in \{1,...,n\}$.
\end{assumption}

\begin{assumption}
    The general link function $g$ satisfies the following regularity conditions: (a) The link function $g$ is twice differentiable, monotonic increasing, Lipschitz on $\mathbb{R}$, and concave on $\mathbb{R}_{+}$; and for any $x\in \mathbb{R}$, it holds that $g(x)+g(-x)=1$; (b) There exist some constants $C_1$, $C_2>0$ such that, for all $x \geq 0$, $g(x) \leq \Phi(C_1x)$ where $\Phi(x)$ is the standard Gaussian cdf, and $\max\{g'(x)/(1-g(x)), x^2g'(x)\} < C_2$; (c) There exist some constants $c>0$ such that $\sup_{x\in \mathbb{R}}|g''(x)| \leq c$; (d) For $\ell_g(\beta)$ defined by
        \begin{equation}
            \ell_g(\beta) = -\frac{1}{n}\sum_{i=1}^n y_i\log\left[\frac{g(X_i^\top\beta)}{1-g(X_i^\top\beta)}\right]-\frac{1}{n}\sum_{i=1}^n\log(1-g(X_i^\top\beta)),
        \label{eq:lg}
        \end{equation}
        there exists some constant $C>1$ such that the Hessian matrix $\ell''_g(\beta)$ can be expressed as $\ell''_g(\beta) = \sum_{i=1}^n h(\beta;y_i,X_i)X_iX_i^\top/n$ for some $h(\beta;y_i,X_i)>0$ satisfying
        \begin{equation*}
            \max_{1\leq i\leq n}\left|\log h(\beta+b;y_i,X_i) - \log h(\beta;y_i,X_i)\right| \leq C(|X_i^\top \beta|^2+|X_i^\top b|^2+|X_i^\top b|).
        \end{equation*}
\end{assumption}
\textit{Assumption 2.3} allows many links used in practice, including but not limited to canonical links. The supplement verifies these conditions for the examples below.
\begin{example}
    \textnormal{The Logistic regression model prescribes the following link function
\[g(x) = \frac{\exp(x)}{1+\exp(x)},\]
which has a characteristic, $g'(x) = g(x)(1-g(x)) = \exp(x)/(1+\exp(x))^2$.}
\end{example}

\begin{example}
    \textnormal{Consider a latent variable model:
\[y_i^* = X_i^\top \beta + \epsilon_i, \quad y_i = \boldsymbol{I}(y_i^* \geq 0),\]
where $g(\cdot)$ is the cdf of $-\epsilon_i$. Beside the logistic link function, examples include:
\setcounter{bean}{0}
\begin{list}
{(\alph{bean})}{\usecounter{bean}}
    \item Probit link function: $g(x) = \Phi (x)$ where $\Phi(x)$ denoting the standard Gaussian cdf and $g'(x) = \phi(x)$ being the standard Gaussian pdf.
    \item Student's $t_v$-distributions with $\nu \in \mathbb{N} $: $g'(x) = \frac{\Gamma_d(\frac{\nu+1}{2})}{\sqrt{\nu \pi}\Gamma_d(\frac{\nu}{2})}\left(1+\frac{x^2}{\nu}\right)^{-(\nu+1)/2}$ where $\nu$ is the degree of freedom and $\Gamma_d$ is the gamma function.
\end{list}}
\end{example} 
\begin{remark}
    \textnormal{We treat the link as known (\textit{Assumption 2.2}).  When it is unknown, \citet{liu2021singleindex} propose an average-derivative estimator for fixed linear combinations of coefficients. Their estimator could in principle be combined with our balancing step; we leave this extension to future work.}
\end{remark}

Our target parameter is
\begin{equation}
    \tau := \frac{1}{n_t} \sum_{\{i:D_i = 1\}} \E[Y_i(1) - Y_i(0)|X_i],
\end{equation}
where $n_t$ is the number of treated units. Under \textit{Assumption 2.2},
\begin{equation}
    \tau = \mu_t - \mu_c,\quad \quad \mu_c = \frac{1}{n_t}\sum_{\{i:D_i = 1\}} g(X_i^\top \beta_c)
\end{equation}
We can estimate $\mu_t$ unbiasedly by a simple average of $\{Y_i^{\text{obs}}, D_i = 1\}$. The main challenge is high-dimensional estimation of $\mu_c$.

\subsection{Debiased Estimation Procedure}
We estimate $\mu_c$ by combining a regularized GLM plug-in with a residual reweighting step. The plug-in can suffer from regularization bias, especially when treated and control covariates differ and extrapolation is required. We therefore reweight control residuals using a convex program that (i) balances link-derivative–weighted covariates between treated and controls and (ii) controls variance.

\begin{algorithm}
\textnormal{
Input: design matrix $\boldsymbol{X} \in \mathbb{R}^{n \times p}$, treatment indicator $\boldsymbol{D} \in \{0,1\}^n$, outcomes $\boldsymbol{Y} \in \{0,1\}^n$.}
    \setcounter{bean}{0} 
\begin{center}
\begin{list}
{\textnormal{\textsc{Step} \arabic{bean}.}}{\usecounter{bean}}
\item \textnormal{Randomly split the control sample into two equal parts, $\mathcal{S}_1$ and $\mathcal{S}_2$, each containing $n_c$ observations.}
\item \textnormal{Fit a penalized GLM on $\mathcal{S}_1$:
        \begin{equation}
            \hat{\beta}_{c1} = \argmin_{\beta} \{\ell_g(\beta)+\lambda_n ||\beta||_1\}
        \end{equation}
        where $\ell_g(\beta)$ is defined by \eqref{eq:lg}.}\footnote{Throughout this paper, we use the lasso method to compute the plug-in regression estimate \(\hat{\beta}_{c1}\). However, other methods that offer similar \(L_1\) estimation error bounds can also be used.}
        
\item \textnormal{Compute weights $\gamma_i$ by solving
        \normalsize{
        \begin{equation*}
            \begin{aligned}
                \min_{\boldsymbol{\gamma}} &\left[\ \ \quad \quad (1-\zeta)\sum_{\{i \in \mathcal{S}_2\}} \gamma_i^2 g(X_i^\top \hat{\beta}_{c1})(1 - g(X_i^\top \hat{\beta}_{c1})) +\right. \\
                &\left. \quad \zeta \left\Vert \frac{1}{n_t}\sum_{\{i:D_i=1\}} g'(X_i^\top\hat \beta_{c1})X_i^\top -  \sum_{\{i \in \mathcal{S}_2\}} \gamma_i g'(X_i^\top\hat \beta_{c1})X_i^\top \right \Vert_\infty^2 \quad \right]
            \end{aligned}
        \end{equation*}
        }
        \normalsize{
        \begin{equation}
            s.t. \ \ \sum_{i \in \mathcal{S}_2} \gamma_i = 1, \quad 0\leq \gamma_i \leq \log(n_c)/n_c.
            \label{eq:optm_initial} 
        \end{equation}
        }}

\item \textnormal{Form
        \begin{equation}
            \hat{\mu}_c = \frac{1}{n_t} \sum_{\{i:D_i = 1\}}g(X_i^\top \hat{\beta}_{c1}) + \sum_{i \in \mathcal{S}_2} \gamma_i(Y_i - g(X_i^\top \hat{\beta}_{c1})).
        \label{eq:mu_chat}
        \end{equation}
        and estimate ATET by:
        \begin{equation}
            \hat{\tau} = \frac{1}{n_t}\sum_{\{i:D_i = 1\}} Y_i^{\text{obs}} - \hat{\mu}_c
            \label{eq:tau_hat}
        \end{equation}}
\end{list}
\end{center}
\end{algorithm}

Compared to the approach by \citet{athey2018approximate}, 
the key differences are: our objective accounts for GLM heteroskedasticity when penalizing variance, and the balance conditions use link-derivative–weighted covariates rather than residualized linear covariates. Relative to \citet{cai2023statistical}, who target single coefficients, our program targets $\mu_c$ which depends on all entries of $\beta_c$ through $g(X^\top\beta_c)$ and the treated covariates.

The optimization function in \eqref{eq:optm_initial} has a minimax interpretation. The objective combines a variance term and a squared worst-case design-conditional bias term; $\zeta\in[0,1]$ governs the trade-off. At $\zeta=1/2$ the optimization problem chooses weights that minimize a maximal mean squared error (MSE). This links our estimator to the augmented minimax linear estimator of \citet{hirshberg2021augmented}. Their analysis relies on Donsker classes; our high-dimensional setting does not, and we instead use standard assumptions for regularized GLMs.
    
\begin{remark}
    \textnormal{Note that the sample-splitting process is primarily used to simplify theoretical analysis and does not pose a limitation for the applications. 
    In simulations of Section 4, cross-fitting and full-sample implementations perform better by using all observations.
    This mirrors the strategy used by \citet{cai2023statistical}, who, while demonstrating the favourable properties of their estimator through sample-splitting, also effectively showed its practical efficacy using full sample in simulations.}
\end{remark}

\begin{remark}
    \textnormal{While our primary focus is on estimating the ATET, analogous estimators can be constructed, such as the ATE \(\sum_{i=1}^n \E[Y_i(1)-Y_i(0)|X_i]/n\), 
    and the ATEC \(\sum_{\{i:D_i=0\}} \E[Y_i(1)-Y_i(0)|X_i]/(n-n_t)\), by changing the target sample over which the missing potential outcomes are imputed.}
\end{remark}

\subsection{Comparison with Different Methods}
In this subsection, we focus on comparing different approaches for estimating $\mu_c$.\footnote{For simplicity, the intuitive comparison is based on no-splitting implementations for all estimators (direct plug-in and plug-in with different weightings).} First, let us consider a direct plug-in estimator, a plug-in with equal residual weights, and our plug-in plus optimized weights in~\eqref{eq:mu_chat}. The direct plug-in estimator is
\begin{equation}
    \hat{\mu}_{c,plug} = \frac{1}{n_t} \sum_{\{i:D_i=1\}} g(X_i^\top \hat \beta_c)
    \label{eq:mu_chat_plugin}
\end{equation}
and the plug-in estimator with simple averaging of residuals: 
\begin{equation}
    \hat{\mu}_{c,reg} = \frac{1}{n_t} \sum_{\{i:D_i=1\}} g(X_i^\top \hat \beta_c) + \frac{1}{n_c} \sum_{\{i: D_i=0\}} (Y_i^{obs}-g(X_i^\top \hat \beta_c))
    \label{eq:mu_chat_reg}
\end{equation}
where $\hat{\beta}_c$ is obtained by Lasso. 
Examine error analyses by contrasting the discrepancies among estimators $\hat{\mu}_{c,plug}$, $\hat{\mu}_{c,reg}$, $\hat{\mu}_{c}$ and the true value $\mu_c$:
\begin{equation}
    \hat \mu_{c,plug} -\mu_c = \left[\frac{1}{n_t}\sum_{\{i:D_i=1\}} g'(X_i^\top\hat \beta_c)X_i^\top \right](\hat{\beta}_c - \beta_c)
    + \frac{1}{n_t}\sum_{\{i:D_i=1\}} \Delta_i^t,
\label{eq:mu_chat_plugin_error}
\end{equation}
\begin{equation}
    \begin{aligned}
    \hat \mu_{c,reg} -\mu_c &= \left[\frac{1}{n_t}\sum_{\{i:D_i=1\}} g'(X_i^\top\hat \beta_c)X_i^\top -  \frac{1}{n_c}\sum_{\{i:D_i=0\}} g'(X_i^\top\hat \beta_c)X_i^\top \right](\hat{\beta}_c - \beta_c) \\
    &+ \frac{1}{n_c} \boldsymbol{1}_{n_c}^\top\boldsymbol{\varepsilon} + \frac{1}{n_t}\sum_{\{i:D_i=1\}} \Delta_i^t + \frac{1}{n_c}\sum_{\{i:D_i=0\}} \Delta_i^c,
    \label{eq:mu_chat_reg_error}
    \end{aligned}
\end{equation}
\begin{equation}
    \begin{aligned}
    \hat \mu_{c} -\mu_c &= \left[\frac{1}{n_t}\sum_{\{i:D_i=1\}} g'(X_i^\top\hat \beta_c)X_i^\top -  \sum_{\{i:D_i=0\}} \gamma_i g'(X_i^\top\hat \beta_c)X_i^\top \right](\hat{\beta}_c - \beta_c) \\
    &+ \boldsymbol{\gamma}^\top\boldsymbol{\varepsilon} + \frac{1}{n_t}\sum_{\{i:D_i=1\}} \Delta_i^t + \sum_{\{i:D_i=0\}} \gamma_i \Delta_i^c.
    \label{eq:mu_chat_error}
    \end{aligned}
\end{equation}
where $\Delta_i^{(\cdot)} = g''(X_i^\top\hat \beta_c + t_{i}^{(\cdot)}X_i^\top(\beta-\hat \beta_c))[X_i^\top(\hat \beta_c-\beta)]^2$ for some $t_{i}^{(\cdot)} \in (0,1)$.

In low dimensions, the direct plug-in estimator with MLE is $\sqrt{n_c}$-consistent and asymptotically normal. In high dimensions, plugging in regularized estimates need not retain these properties due to shrinkage bias. When treated and control groups are (approximately) balanced, the regression–imputation estimator in \eqref{eq:mu_chat_reg_error} improves on the plug-in \eqref{eq:mu_chat_plugin_error} by averaging residuals and can be valid under additional conditions. However, with noticeable imbalance, the simple residual average does not control the first linear term in \eqref{eq:mu_chat_reg_error}. In contrast, our estimator \eqref{eq:mu_chat_error} are flexible to be able to chooses weights to balance the derivative-weighted covariates and let the first term to be as small as possible.

By applying H\"older's inequality, the absolute value of \eqref{eq:mu_chat_error} can be bounded as follows:
\begin{equation}
    \begin{aligned}
        |\hat \mu_{c} -\mu_c| & \leq 
        \underbrace{\left\Vert \frac{1}{n_t}\sum_{\{i:D_i=1\}} g'(X_i^\top\hat \beta_c)X_i^\top -  \sum_{\{i:D_i=0\}} \gamma_i g'(X_i^\top\hat \beta_c)X_i^\top \right \Vert_\infty ||\hat{\beta}_c - \beta_c||_1}_{\text{(a) Main term of bias}} \\
        &+ \underbrace{\left|\sum_{\{i:D_i=0\}} \gamma_i \varepsilon_i \right|}_{\text{(b) Main term of variance}} + \underbrace{\left|\frac{1}{n_t}\sum_{\{i:D_i=1\}} \Delta_i^t \right| + \left|\sum_{\{i:D_i=0\}} \gamma_i \Delta_i^c\right|}_{\text{(c) Negligible terms}}.
    \label{eq:mu_chat_error_byH}
    \end{aligned}
\end{equation}
In Equation \eqref{eq:mu_chat_error_byH}, term (a) is the primary bias, which is our main concern, term (b) denotes the variance, anticipated to be of order $O_p(1/\sqrt{n}_c)$, and term (c) consists of negligible components.
The bias term (a) equals the imbalance in derivative-weighted covariates times the $L_1$ error of the first stage estimates.
Under standard conditions, this error scales as $O_p(k\sqrt{\log(p)/n_c})$. Therefore, our optimization should drive the imbalance to $O_p(\sqrt{\log(p)/n_c})$. 
to ensure that the entire bias term (a) is negligible relative to the variance term (b) when $k \ll \sqrt{n_c}/\log p$. 
This key insight is the main foundation of our proofs of theorems presented in Section 3.

We next consider benchmark our estimator against methods that use inverse propensity weighting. For example,
a pure weighting method $\hat{\mu}_{c,weight} = \sum_{\{i:D_i  =0\}} \gamma_i Y_i^{\text{obs}}$ with
\begin{equation}
    \gamma_i = \frac{\frac{p(X_i)}{1-p(X_i)}}{\sum_{\{i:D_i = 0\}}\frac{p(X_i)}{1-p(X_i)}}.
    \label{eq:mu_chat_weig}
\end{equation}
where $p(x) := \mathbb{P}(D=1|X=x)$ is the propensity score.
With nonparametric $p(\cdot)$ and a fixed number of covariates, this estimator is asymptotically linear and attains the semiparametric efficiency bound, as shown by \citet{hirano2003efficient}.
We also considers the Double Machine Learning (DML) estimator of \citet{chernozhukov2018double}, which combines regression imputation with inverse propensity weighting,
\begin{equation}
    \hat{\mu}_{c,DL} = \frac{1}{n_t} \sum_{\{i:D_i = 1\}} g(X_i^\top \hat{\beta}_c) + \sum_{\{i:D_i = 0\}} \frac{\frac{\hat{p}(X_i)}{1-\hat{p}(X_i)}(Y_i^{\text{obs}}-g(X_i^\top \hat{\beta}_c))}{\sum_{\{i:D_i=0\}}\frac{\hat{p}(X_i)}{1-\hat{p}(X_i)}}
    \label{eq:mu_chat_DL}
\end{equation}
With sample-splitting or cross-fitting, DML exhibits desirable asymptotic performance, as long as the estimations of both the outcome and propensity score model are moderately well.
However, in high dimensions and finite sample, inverse weights can be unstable. Specifically, a minor perturbation in the estimated value of $1-p(X_i)$ can translate into a significant effect, especially when it is close to 0, due to the involvement of its inverse in the weighting. This challenge is indicative of a broader issue that often emerges when inverting estimated values.
Our estimator differs in how the weights are formed: we strategically relies on the GLM structure to choose weights through a single convex program that balances link-derivative–weighted covariates and penalizes variance. This avoids the instability associated with inverse weights. Conversely, while our method is confined to the GLM setting, inverse propensity score weights debiasing accommodates any general form of outcome models, highlighting a flexibility not inherent to our approach.

Finally, we consider AML, as developed by \citet{chernozhukov2022automatic}. It avoids inverse propensity weights by estimating a Riesz representer $\alpha(\cdot)$ and then forming
\begin{equation}
    \hat{\mu}_{c,AL} = \frac{1}{n_t} \sum_{\{i:D_i = 1\}} g(X_i^\top \hat{\beta}_c) + \sum_{\{i:D_i = 0\}} \hat{\alpha}(X_i)(Y_i^{\text{obs}}-g(X_i^\top \hat{\beta}_c))
    \label{eq:mu_chat_AML}
\end{equation}
where \(\hat{\alpha}(\cdot)\) can be learned by a minimum-distance program (e.g., Lasso). 
In contrast to our method, which controls both bias and variance, the AML approach mainly focuses on bias reduction, even under heteroskedasticity case, through its goal of Riesz representer estimation.

\section{Theoretical Properties}
\setcounter{equation}{0}
\setcounter{theorem}{0}
\setcounter{assumption}{0}
\setcounter{proposition}{0}
\setcounter{corollary}{0}
\setcounter{lemma}{0}
\setcounter{example}{0}
\setcounter{remark}{0}
\subsection{Debiased Estimation for Dense Contrasts}
In the preceding section, the focus was on the causal parameter estimation. To formally establish its asymptotic properties, it is convenient to begin with a linear contrast $\xi^\top \beta_c$ where $\xi$ is a $p \times 1$ vector.
For instance, if $\xi$ is a basis vector like $(1,0,...,0)$, i.e., our target of interest is the first element of $\beta_c$, the problem becomes akin to the widely-studied debiased lasso estimator such as \citet{van2014asymptotically} and \citet{javanmard2014confidence}.
Here we allow $\xi$ to be dense and consider $\theta = \xi^\top \beta_c$. Linked to Algorithm 1, the estimator is given by
\begin{equation}
    \hat{\theta} = \xi^\top \hat{\beta}_{c1} + \sum_{i \in \mathcal{S}_2} \gamma_i(Y_i^{obs} - g(X_i^\top \hat{\beta}_{c1})), 
    \label{eq:est_dense}
\end{equation}
where $\boldsymbol{\gamma}$ solves the constrained program
\begin{equation*}
    \min_{\boldsymbol{\gamma}} \sum_{\{i\in \mathcal{S}_2\}} \gamma_i^2 g(X_i^\top \hat \beta_{c1})(1-g(X_i^\top \hat \beta_{c1}))
\end{equation*}
\begin{equation}
    \begin{aligned}
        s.t.\quad \quad \left\Vert \xi -  \sum_{\{i \in \mathcal{S}_2\}} \gamma_i g'(X_i^\top\hat \beta_{c1})X_i \right \Vert_\infty & \leq C_3 \sqrt{\log p/n}, \\
        \max_{i \in \mathcal{S}_2} |\gamma_i| & \leq C_4 \frac{\sqrt{\log n_c}}{n_c}.
        \label{eq:optm_dense}
    \end{aligned}
\end{equation}
Note that the optimization problem here is represented in a constraint form rather than the Lagrange form in Algorithm 1, facilitating theoretical proofs.
Intuitively, the goal of this problem is to find the weights that minimize the variance, while simultaneously ensuring that the bias remains constrained to a low level.
This formulation bridges important cases. For $\xi=e_1$ it aligns with \citet{cai2023statistical}. For $\xi=\frac{1}{n_t}\sum_{D_i=1} g'(X_i^\top\hat\beta_{c1})X_i$, $\hat\theta$ is equivalent to $\hat\mu_c$. 

It should be emphasized that applying the method of \citet{cai2023statistical} directly within our framework is not suitable. 
Specifically, Theorem 1 in their work indicates that their debiased estimator \(\tilde{\beta}\) satisfies \(\sqrt{n_c}(\tilde{\beta}_{c,i} - \beta_{c,i}) = A_{n,i} + B_{n,i}\), where \(A_{n_i}\) is asymptotically normally distributed and \(B_{n_i}\) is negligible per coordinate. But the aggregated error term \(\sum_{i=1}^p \xi_i B_{n,i}\) need not be negligible, as it could increase polynomially with the dimension of covariates $p$.
In addition, the simultaneous inference approach suggested by Section 2.3 of \citet{cai2023statistical} is conservative for our purposes, given our interest in a linear combination of coefficients rather than a subset of coefficients. 
Therefore, our approach extends \citet{cai2023statistical}, accommodating various \(\xi\) values and thereby generalizing from the debiasing of individual coefficients to encompass any combination of coefficients.

We impose two additional assumptions:
\begin{assumption}
    $\{X_i\}_{1\leq i\leq 2n_c}$  satisfies $X_i = Q_i \Sigma^{1/2}$, where $\E(Q_{ij}) = 0$, var$(Q_{ij})=1$, for all $i$ and $j$, and all individual entries $Q_{ij}$ are sub-Gaussian and independent.
\end{assumption}
\begin{assumption}
    Let that $\Sigma = \E[X_iX_i^\top] \in \mathbb{R}^{p\times p}$ and 
    \[\Theta(k) = \{(\beta_c,\Sigma):||\beta||_0 \leq k, ||\beta||_2 \leq M_1, M_2^{-1} \leq \lambda_{\min}(\Sigma)\leq \lambda_{\max}(\Sigma) \leq M_2\}\]
    for constants $M_1>0$ and $M_2>1$ independent of $n_c$ and $p$. Suppose $(\beta_c, \Sigma) \in \Theta(k)$.
\end{assumption}

While \textit{Assumption 3.1} may impose restrictions in practical applications, it is an important technical condition to ensure the feasibility of the minimization problem.
In causal estimation, this assumption can be relaxed under the overlap assumption. 
\textit{Assumption 3.2}, on the other hand, is conventional within the literature under the exact sparsity and GLM framework.\footnote{The exact sparsity can be weakened to approximate sparsity, see \citet{belloni2016post}, as long as the conditions on the estimation error bound hold, since our theoretical results depend only on the error bound on coefficients estimation. For simplicity, we restrict our attention to the exact sparsity case for theoretical properties. However, in our simulation study (Section 4), we include a decaying-coefficient design, illustrating performance when the approximate sparsity holds.}
The positive definiteness of the covariance matrix $\Sigma$ ensures the restricted strong convexity of the loss function, playing a pivotal role in delineating the Lasso risk bounds, such as in \citet{negahban2012unified} and \citet{cai2023statistical}.

\begin{lemma}
    Assuming \textit{Assumptions 2.2-2.3} and \textit{3.1-3.2} hold and given that $\xi^\top \Gamma^{-2} \xi \leq V$ for some constant $V>0$, where
    \begin{equation*}
        \Gamma = \E [g'(X_i^\top\hat \beta_{c1})X_i X_i^\top],
    \end{equation*} 
    it follows that the optimization problem defined by \eqref{eq:optm_dense} is feasible with probability tending to 1.
    Specifically, the constraints are met by setting $\gamma_i^* = \frac{1}{n_c}\xi^\top \Gamma^{-1}X_i$.
\label{lemma:optm}
\end{lemma}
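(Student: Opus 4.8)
The plan is to verify directly that the proposed weights $\gamma_i^* = \frac{1}{n_c}\xi^\top\Gamma^{-1}X_i$ satisfy both constraints in \eqref{eq:optm_dense} with probability tending to one, which is exactly feasibility. Write $u := \Gamma^{-1}\xi$ and $\hat\Gamma := \frac{1}{n_c}\sum_{i\in\mathcal{S}_2} g'(X_i^\top\hat\beta_{c1})X_iX_i^\top$. Substituting $\gamma_i^*$ and using $\sum_{i\in\mathcal{S}_2}\gamma_i^* g'(X_i^\top\hat\beta_{c1})X_i = \hat\Gamma u$, the balance term reduces to $\|\xi - \hat\Gamma u\|_\infty = \|(\Gamma - \hat\Gamma)u\|_\infty$, since $\Gamma u = \xi$. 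Because $\hat\beta_{c1}$ is fit on $\mathcal{S}_1$ and is therefore independent of $\{X_i\}_{i\in\mathcal{S}_2}$, conditioning on $\mathcal{S}_1$ gives $\E[\hat\Gamma\mid\mathcal{S}_1] = \Gamma$, so each coordinate of $(\Gamma-\hat\Gamma)u$ is a centered sample average. I would run the whole argument conditionally on $\mathcal{S}_1$, treating $\hat\beta_{c1}$, $\Gamma$, and $u$ as fixed.

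First I would control the direction $u$: the hypothesis $\xi^\top\Gamma^{-2}\xi\le V$ gives $\|u\|_2^2 = \xi^\top\Gamma^{-2}\xi\le V$, and $\lambda_{\max}(\Sigma)\le M_2$ (Assumption 3.2) yields $u^\top\Sigma u\le M_2 V$. Next, for each coordinate $j$ write $((\Gamma-\hat\Gamma)u)_j = \frac{1}{n_c}\sum_{i\in\mathcal{S}_2}\big(\E[Z_{ij}\mid\mathcal{S}_1] - Z_{ij}\big)$ with $Z_{ij} = g'(X_i^\top\hat\beta_{c1})X_{ij}(X_i^\top u)$. Since $g$ is Lipschitz, $g'$ is bounded, and under Assumption 3.1 both $X_{ij}$ and $X_i^\top u$ are sub-Gaussian (the latter's norm controlled by $\|u\|_2\le\sqrt V$); hence $Z_{ij}$ is sub-exponential with norm depending only on $M_2$ and $V$, and $\mathrm{var}(Z_{ij})\lesssim M_2 V$ by Cauchy--Schwarz applied to $\E[X_{ij}^2(X_i^\top u)^2]$. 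Applying Bernstein's inequality coordinatewise and taking a union bound over $j=1,\dots,p$, and using $n\simeq n_c$, gives $\|(\Gamma-\hat\Gamma)u\|_\infty\le C_3\sqrt{\log p/n}$ with probability at least $1 - 2p^{1-c_0 C_3^2}$; choosing $C_3$ large drives this to one, the high-dimensional scaling $\log p = o(n_c)$ ensuring Bernstein sits in its sub-Gaussian regime.

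For the box constraint, note $\gamma_i^* = \frac{1}{n_c}u^\top X_i$, and $u^\top X_i$ is sub-Gaussian with proxy variance $\sigma_u^2 = u^\top\Sigma u\lesssim M_2 V$ (again by Assumption 3.1 and $\|u\|_2\le\sqrt V$). The sub-Gaussian tail bound gives $\mathbb{P}(|u^\top X_i| > C_4\sqrt{\log n_c}) \le 2 n_c^{-C_4^2/(2\sigma_u^2)}$ for each $i$, and a union bound over the $n_c$ points of $\mathcal{S}_2$ yields $\max_{i\in\mathcal{S}_2}|u^\top X_i|\le C_4\sqrt{\log n_c}$ with probability at least $1 - 2n_c^{1-c_1 C_4^2}$, hence $\max_{i\in\mathcal{S}_2}|\gamma_i^*|\le C_4\sqrt{\log n_c}/n_c$ once $C_4$ is large. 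Combining the two displays, both constraints hold simultaneously with probability tending to one, so the feasible set is nonempty.

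I expect the main obstacle to be the coordinatewise concentration of the second paragraph: one must verify the sub-exponential moment bounds for $Z_{ij}$ \emph{uniformly} in $j$ and, crucially, tie the variance proxy to the assumed quantity $\xi^\top\Gamma^{-2}\xi$ rather than to $\|\xi\|$ directly—this is what makes the bound dimension-free and lets the union bound over $p$ coordinates go through at the rate $\sqrt{\log p/n}$. A secondary care point is the conditioning on $\mathcal{S}_1$: it is what makes $\E[\hat\Gamma\mid\mathcal{S}_1]=\Gamma$ exact and preserves the sub-Gaussianity of the $\mathcal{S}_2$ covariates, so it must be invoked before any concentration step.
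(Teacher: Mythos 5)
Your proposal is correct and takes essentially the same route as the paper: both arguments verify feasibility of the explicit candidate $\gamma_i^* = \frac{1}{n_c}\xi^\top\Gamma^{-1}X_i$ via coordinatewise concentration plus a union bound over the $p$ balance coordinates, and a sub-Gaussian tail bound plus union bound over the $n_c$ observations for the box constraint, with the target hypothesis $\xi^\top\Gamma^{-2}\xi \le V$ controlling the relevant variance proxies. The only difference is cosmetic: the paper invokes the Hanson--Wright inequality on the quadratic form $Q_i^\top A_j^i Q_i$ with the rank-one matrix $A_j^i$, which in the rank-one case is precisely your ``product of two sub-Gaussians is sub-exponential'' fact followed by Bernstein; your explicit conditioning on $\mathcal{S}_1$ (making $\E[\hat\Gamma\mid\mathcal{S}_1]=\Gamma$ exact) is a careful statement of what the paper uses implicitly in asserting $\E\bigl[\sum_{i\in\mathcal{S}_2}\gamma_i^* g'(X_i^\top\hat\beta_{c1})X_i\bigr]=\xi$.
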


Lemma \ref{lemma:optm} plays a pivotal role in facilitating subsequent theorems related to the asymptotic analysis. 
It provides a feasible solution, $\gamma_i^*$, that forms an upper bound for the asymptotic variance of the debiased estimator. 
Given that $\Gamma$ is invertible for any given $\hat{\beta}_{c1}$ by \textit{Assumptions 2.3} and \textit{3.2}, the condition $\xi^\top \Gamma^{-2} \xi \leq V$ fundamentally serves to constrain the magnitude of $||\xi||_2$.

\begin{theorem}
    Assuming the conditions of Lemma \ref{lemma:optm} and the following constraints: a minimum estimand size of 
    $||\xi||_{\infty} \geq \kappa >0$, 
    a lasso penalty parameter is $\lambda_n = O(\sqrt{\frac{\log p}{n_c}})$, and a sparsity level $k \ll \frac{\sqrt{n_c}}{\log p \log n_c}$,
    the estimator defined by \eqref{eq:est_dense} is asymptotically normal. Specifically,
    \[(\hat{\theta} - \theta)/\sqrt{v} \stackrel{d}{\rightarrow} N(0,1),\]
    with $v:=\sum_{i\in \mathcal{S}_2} \gamma_i^2 g(X_i^\top \beta_c)(1-g(X_i^\top \beta_c))$ and $v = O_p(1/n_c)$. In addition, the variance $v$ can be consistently estimated by $\hat{v}:=\sum_{i \in \mathcal{S}_2} \gamma_i^2 g(X_i^\top\hat{\beta}_{c1})(1-g(X_i^\top\hat{\beta}_{c1}))$, i.e., $\hat{v} \stackrel{p}{\rightarrow} v$.
\label{thm:dense}
\end{theorem}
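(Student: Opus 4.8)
The plan is to expand $\hat\theta-\theta$ through a second-order Taylor expansion of the link $g$, isolate a stochastic term governed by a conditional central limit theorem, and control the remaining bias and remainder using Lemma~\ref{lemma:optm} together with standard Lasso rates. Substituting $Y_i^{obs}=g(X_i^\top\beta_c)+\varepsilon_i$ and expanding $g(X_i^\top\beta_c)-g(X_i^\top\hat\beta_{c1})$ around $X_i^\top\hat\beta_{c1}$ produces
\[
\hat\theta-\theta = \underbrace{\Big[\xi-\sum_{i\in\mathcal{S}_2}\gamma_i g'(X_i^\top\hat\beta_{c1})X_i\Big]^\top(\hat\beta_{c1}-\beta_c)}_{(a)} + \underbrace{\sum_{i\in\mathcal{S}_2}\gamma_i\varepsilon_i}_{(b)} + \underbrace{\sum_{i\in\mathcal{S}_2}\gamma_i\Delta_i}_{(c)},
\]
where $\Delta_i$ is the quadratic Taylor remainder. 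I would show that $(b)$, conditionally on $\mathcal{F}:=\sigma(\mathcal{S}_1,\{X_i\}_{i\in\mathcal{S}_2})$, is asymptotically $N(0,v)$, while $(a)$ and $(c)$ are $o_p(\sqrt v)$.

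The preliminary inputs I would assemble are: (i) the Lasso rates $\|\hat\beta_{c1}-\beta_c\|_1=O_p(k\sqrt{\log p/n_c})$ and $\|\hat\beta_{c1}-\beta_c\|_2=O_p(\sqrt{k\log p/n_c})$, from restricted strong convexity of $\ell_g$ under Assumptions 3.1--3.2; (ii) the in-sample prediction bound $\sum_{i\in\mathcal{S}_2}[X_i^\top(\hat\beta_{c1}-\beta_c)]^2=O_p(k\log p)$, where the independence of $\mathcal{S}_2$ from $\hat\beta_{c1}$ makes the conditional mean $n_c\,\delta^\top\Sigma\delta$ exact (this is the payoff of sample splitting); and (iii) the two-sided bound $v\simeq 1/n_c$. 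The upper bound $v=O_p(1/n_c)$ follows from optimality: the feasible weights $\gamma_i^*=\tfrac1{n_c}\xi^\top\Gamma^{-1}X_i$ of Lemma~\ref{lemma:optm} yield objective value at most $\tfrac1{4n_c^2}\sum_i(\xi^\top\Gamma^{-1}X_i)^2$, whose conditional mean is $\tfrac1{n_c}\xi^\top\Gamma^{-1}\Sigma\Gamma^{-1}\xi\le M_2 V/n_c$. The lower bound $v\gtrsim1/n_c$ is where $\|\xi\|_\infty\ge\kappa$ enters: at the coordinate $j^*$ attaining $\|\xi\|_\infty$, the balance constraint forces $|\sum_i\gamma_i g'(X_i^\top\hat\beta_{c1})X_{ij^*}|\ge\kappa-C_3\sqrt{\log p/n}\ge\kappa/2$, and Cauchy--Schwarz with the uniform boundedness of $(g')^2/[g(1-g)]$ (from Assumption 2.3(b), the Lipschitz property, and $g(x)+g(-x)=1$) gives $\kappa^2/4\le \hat v\cdot C\max_j\sum_iX_{ij}^2=\hat v\cdot O_p(n_c)$; a union bound yields $\max_j\sum_iX_{ij}^2=O_p(n_c)$ uniformly over the data-dependent index $j^*$.

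Given these, I would bound the two negligible pieces and handle the main term. For $(a)$, H\"older's inequality and the balance constraint give $|(a)|\le C_3\sqrt{\log p/n}\,\|\hat\beta_{c1}-\beta_c\|_1=O_p(k\log p/n_c)=o_p(1/\sqrt{n_c})$ precisely under $k\ll\sqrt{n_c}/(\log p\log n_c)$. For $(c)$, $\sup|g''|\le c$ and the box constraint $\max_i|\gamma_i|\le C_4\sqrt{\log n_c}/n_c$ give $|(c)|\le c\,\max_i|\gamma_i|\sum_i[X_i^\top(\hat\beta_{c1}-\beta_c)]^2=O_p(k\log p\sqrt{\log n_c}/n_c)$, again $o_p(1/\sqrt{n_c})$ under the same sparsity. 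For $(b)$ I would argue conditionally on $\mathcal{F}$: the $\gamma_i$ are $\mathcal{F}$-measurable and the $\varepsilon_i$ are conditionally independent, mean zero, with variances $g(X_i^\top\beta_c)(1-g(X_i^\top\beta_c))$ summing to $v$ and with $|\varepsilon_i|\le1$. The Lyapunov ratio is then at most $\max_i|\gamma_i|/\sqrt v\lesssim(\sqrt{\log n_c}/n_c)/(1/\sqrt{n_c})=\sqrt{\log n_c/n_c}\to0$, so the conditional Lyapunov CLT gives $(b)/\sqrt v\stackrel{d}{\rightarrow}N(0,1)$ on an event of probability tending to one; since the limit is free of $\mathcal{F}$, it holds unconditionally. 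Combining with $(a),(c)=o_p(\sqrt v)$ and Slutsky delivers $(\hat\theta-\theta)/\sqrt v\stackrel{d}{\rightarrow}N(0,1)$.

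Finally, for $\hat v\stackrel{p}{\rightarrow}v$ I would use that $t\mapsto g(t)(1-g(t))$ is Lipschitz, so $|\hat v-v|\le L\max_i|\gamma_i|\,\sqrt{\textstyle\sum_i\gamma_i^2}\,\sqrt{\textstyle\sum_i[X_i^\top(\hat\beta_{c1}-\beta_c)]^2}$; with $\sum_i\gamma_i^2=O_p(\log n_c/n_c)$ from the box constraint and the prediction bound, this is $O_p(\log n_c\sqrt{k\log p}/n_c^{3/2})=o_p(1/n_c)$, which requires no lower bound on $v$, so combined with $v\gtrsim 1/n_c$ it gives $\hat v/v\stackrel{p}{\rightarrow}1$. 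The main obstacle I anticipate is the lower bound $v\gtrsim1/n_c$: it is indispensable both for non-degeneracy of the limit and for verifying the Lyapunov condition, it is the only step that genuinely exploits the minimum-estimand condition $\|\xi\|_\infty\ge\kappa$ and the link-specific ratio bound, and making it uniform over the random coordinate $j^*$ requires the union-bound control of $\max_j\sum_iX_{ij}^2$ noted above.
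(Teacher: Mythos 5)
Your proposal is correct and follows essentially the same route as the paper's proof: the same three-term decomposition of $\hat\theta-\theta$, the same H\"older/balance-constraint bound $O_p(k\log p/n_c)$ on the bias term, the same box-constraint-times-prediction-error bound on the Taylor remainder, the same conditional Lyapunov CLT with ratio $\max_i|\gamma_i|/\sqrt{v}$, and the same key lower bound $v\gtrsim 1/n_c$ obtained from $\|\xi\|_\infty\ge\kappa$, the balance constraint, Cauchy--Schwarz, and the bound on $(g')^2/[g(1-g)]$ (the paper's Lemma A.2, proved in the supplement). The only cosmetic deviations are your Lipschitz argument for $|\hat v - v|=o_p(1/n_c)$, where the paper instead shows the ratio of $g(1-g)$ at $\hat\beta_{c1}$ versus $\beta_c$ is uniformly $1+o_p(1)$ and multiplies by $\hat v$, and your union-bound care over the maximizing coordinate $j^*$, which is unnecessary here since $\xi$ is deterministic in this theorem (it matters only for the random-$\xi$ case of Theorem 3.2); both variants deliver the same conclusions under the stated sparsity condition.
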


The lower bound on $\|\xi\|_\infty$ excludes near-zero targets such as $\xi = (1/\sqrt{p},...,1/\sqrt{p})$.
The selection of the lasso tuning parameter's order aligns with established standards found in the literature, as exemplified by \citet{bickel2009simultaneous}. 
However, it is important to note that the sparsity level requirement is slightly stronger than that in \citet{athey2018approximate}, where $k \ll \frac{\sqrt{n_c}}{\log p}$. 
This requirement in the condition is necessitated by the GLM framework; the error decomposition \eqref{eq:mu_chat_error_byH} introduces Taylor expansion's remainder terms, and a stronger constraint is required to ensure their negligibility.

\subsection{Debiased Estimation for Causal Parameters}
We now analyse ATET.
The main distinction lies in the vector $\xi$: it is now considered as a random vector, whereas in the previous analysis, it was treated as a deterministic vector. 
More precisely, we have $\xi = \frac{1}{n_t}\sum_{{i:D_i=1}} g'(X_i^\top \hat{\beta}_{c1})X_i$, leading to:
\begin{equation}
    \hat{\mu}_c - \mu_c = \hat{\theta} - \theta +\frac{1}{n_t} \sum_{i:D_i = 1} \Delta_i^t.
\end{equation}
The second term is the second-order remainder of the Taylor expansion and negligible under certain conditions.
For this target we replace \textit{Assumption 3.1} with conditions more common in causal inference:
\begin{assumption}
    $\{X_i\}_{1\leq i\leq n}$ conditional on $D_i=d$, where $d\in\{0,1\}$, are independent and identically distributed sub-Gaussian random vectors, that is, there exists a constant $c \in \mathbb{R}$ satisfying $\E[\exp\{v^\top X\}|D=d]\leq \exp\{||v||_2^2 c^2/2\}$ for all $v \in \mathbb{R}^p$ after re-centring.
\end{assumption}
\begin{assumption}
    There is a constant $\epsilon>0$ such that $\epsilon \leq p(x)\leq 1-\epsilon$ for all $x\in \mathbb{R}$.
\end{assumption}

\begin{theorem}
    Under \textit{Assumptions 2.1-2.3} and \textit{3.2-3.4} and assuming that the overall odds of receiving treatment, denoted by $\mathbb{P}(D=1)/\mathbb{P}(D=0)$, converge to a limit $ \rho \in (0,\infty)$, 
    along with the constraints on the sparsity level $k \ll \frac{\sqrt{n_c}}{\log p \log n_c}$, the lasso penalty parameter $\lambda_n = O(\sqrt{\log p/n_c})$ and $\ell_\infty$ norm of $||\E[g'(X^\top \hat{\beta}_{c1})X|D=1]||_\infty \geq \kappa>0$, 
    the estimator defined by \eqref{eq:mu_chat}, with weights as given by the constrained form of \eqref{eq:optm_initial}, is asymptotically normal:
        \[(\hat{\mu}_c - \mu_c)/\sqrt{v_c} \stackrel{d}{\rightarrow} N(0,1),\]
    where $v_c :=\sum_{i\in \mathcal{S}_2} \gamma_i^2 g(X_i^\top \beta_c)(1-g(X_i^\top \beta_c))$, $v_c= O_p(1/n_c)$ and the variance estimate $\hat{v}_c:=\sum_{i \in \mathcal{S}_2} \gamma_i^2 g(X_i^\top\hat{\beta}_{c1})(1-g(X_i^\top\hat{\beta}_{c1}))$ converges in probability to $v_c$ at a rate faster than $1/n_c$.
    Consequently, the following results hold:
    \begin{list}{(\alph{bean})}{\usecounter{bean}}
        \item For the $\mu_c$ estimation:
        \[(\hat{\mu}_c - \mu_c)/\sqrt{\hat{v}_c} \stackrel{d}{\rightarrow} N(0,1).\]
        \item For the causal parameter estimation, given by $\hat{\tau} = \bar{Y}_t - \hat{\mu}_c$, the following result holds:
        \begin{equation}
            \frac{\hat{\tau} - \tau}{\sqrt{\hat{v}_\tau}} \stackrel{d}{\rightarrow} N(0,1)
        \end{equation}
        with $\hat{v}_\tau = \hat{v}_c+\hat{v}_t$ and $\hat{v}_t := \frac{1}{n_t^2}\sum_{\{i:D_i = 1\}}(Y_i - \bar{Y}_t)^2$, where $\hat{v}_\tau = O_p(1/n)$.
    \end{list} 
    \label{thm:formalCF}
\end{theorem}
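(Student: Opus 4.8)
The plan is to reduce the causal-parameter statement to the dense-contrast result of Theorem \ref{thm:dense} by conditioning on the treated subsample, and then to handle the two features that distinguish the present setting: the contrast $\xi = \frac{1}{n_t}\sum_{\{i:D_i=1\}} g'(X_i^\top\hat\beta_{c1})X_i$ is now random, and the error of $\hat\mu_c$ carries the extra Taylor remainder $\frac{1}{n_t}\sum_{\{i:D_i=1\}}\Delta_i^t$. First I would condition on $\mathcal{S}_1$ (which determines $\hat\beta_{c1}$) and on the treated covariates $\{X_i:D_i=1\}$ (which determine $\xi$). Given these, the weights $\boldsymbol\gamma$ solve exactly the program in \eqref{eq:optm_dense} with the realized $\xi$, and the estimator \eqref{eq:mu_chat} coincides with \eqref{eq:est_dense} up to the remainder. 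The strategy is therefore to verify that the hypotheses of Theorem \ref{thm:dense}---feasibility via Lemma \ref{lemma:optm} and the lower bound $\|\xi\|_\infty\geq\kappa'$---hold with probability tending to one for the random $\xi$, apply Theorem \ref{thm:dense} conditionally, and then conclude that the conditional limit is also the unconditional one, since the limiting law $N(0,1)$ does not depend on the conditioning (dominated convergence of the conditional characteristic functions).

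The main obstacle is controlling the random $\xi$, and this is where \textit{Assumptions 3.3} and \textit{3.4} enter. For feasibility I must show $\xi^\top\Gamma^{-2}\xi\leq V$ with high probability. Its population counterpart is $\mu_\xi^\top\Gamma^{-2}\mu_\xi$ with $\mu_\xi=\E[g'(X^\top\hat\beta_{c1})X\mid D=1]$; the overlap condition of \textit{Assumption 3.4} bounds the density ratio $p(x)/(1-p(x))$ by $(1-\epsilon)/\epsilon$, which is precisely what keeps this quadratic form bounded, as it controls the second moment of the implied control weights. I would then show that $\xi$ concentrates around $\mu_\xi$ in the relevant norms using the conditional sub-Gaussianity of \textit{Assumption 3.3} together with the boundedness of $g'$ from \textit{Assumption 2.3}, so that $\xi^\top\Gamma^{-2}\xi\leq V$ and $\|\xi\|_\infty\geq\kappa'$ for some $\kappa'\in(0,\kappa)$ with probability approaching one. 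The assumed convergence $\mathbb{P}(D=1)/\mathbb{P}(D=0)\rightarrow\rho\in(0,\infty)$ ensures $n_t\simeq n_c\simeq n$, so rates stated in $n_c$ transfer to $n$.

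Next I would dispatch the remainder $\frac{1}{n_t}\sum_{\{i:D_i=1\}}\Delta_i^t$. Writing $\Delta_i^t=g''(X_i^\top\hat\beta_{c1}+t_i^t X_i^\top(\beta_c-\hat\beta_{c1}))[X_i^\top(\hat\beta_{c1}-\beta_c)]^2$ and using the uniform bound $\sup_x|g''(x)|\leq c$ from \textit{Assumption 2.3}(c), this term is dominated by the prediction error $\frac{1}{n_t}\sum_{\{i:D_i=1\}}[X_i^\top(\hat\beta_{c1}-\beta_c)]^2=O_p(k\log p/n_c)$ via the restricted-eigenvalue/Lasso bound under $\lambda_n=O(\sqrt{\log p/n_c})$ and $\|\beta_c\|_0\leq k$. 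Under $k\ll\sqrt{n_c}/(\log p\,\log n_c)$ this is $o_p(1/\sqrt{n_c})=o_p(\sqrt{v_c})$, exactly the negligible-term analysis following \eqref{eq:mu_chat_error_byH}. Combining with the conditional application of Theorem \ref{thm:dense} yields $(\hat\mu_c-\mu_c)/\sqrt{v_c}\stackrel{d}{\rightarrow}N(0,1)$, and the variance-consistency argument of Theorem \ref{thm:dense} applied to the realized weights gives $\hat v_c/v_c\stackrel{p}{\rightarrow}1$ (hence $\hat v_c-v_c=o_p(1/n_c)$), so Slutsky delivers part (a).

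Finally, for part (b) I would decompose $\hat\tau-\tau=(\bar Y_t-\mu_t)-(\hat\mu_c-\mu_c)$. Conditional on all covariates, the first term depends only on the treated outcome noise and the second only on the control outcome noise, so the two are independent; a Lindeberg CLT gives asymptotic normality of $\bar Y_t-\mu_t$, whose variance is consistently estimated by $\hat v_t=\frac{1}{n_t^2}\sum_{\{i:D_i=1\}}(Y_i-\bar Y_t)^2$. This conditional independence makes the variances add, so $\hat v_\tau=\hat v_c+\hat v_t$ is the correct normalizer, and joint asymptotic normality of the standardized sum follows from the two marginal limits. The rate $\hat v_\tau=O_p(1/n)$ follows from $\hat v_c=O_p(1/n_c)$, $\hat v_t=O_p(1/n_t)$ and $n_t\simeq n_c\simeq n$.
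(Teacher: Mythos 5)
Your overall architecture (reduce to the dense-contrast case conditionally on $\mathcal{S}_1$ and the treated covariates, kill the treated-side Taylor remainder, Lyapunov CLT, then add an independent treated-mean CLT for part (b)) matches the paper's in most steps, and your treatment of the remainder $\frac{1}{n_t}\sum_{\{i:D_i=1\}}\Delta_i^t$ and of part (b) is sound. However, there is a genuine gap at the single most important step: feasibility of the weighting program. You propose to verify the hypotheses of Theorem \ref{thm:dense} for the realized $\xi$ and then invoke it conditionally. But Theorem \ref{thm:dense} rests on Lemma \ref{lemma:optm}, whose feasible point $\gamma_i^*=\frac{1}{n_c}\xi^\top\Gamma^{-1}X_i$ is controlled by a Hanson--Wright bound on the quadratic forms $Q_i^\top A_j^i Q_i$, and that bound requires \textit{Assumption 3.1} (the whitened design $X_i=Q_i\Sigma^{1/2}$ with \emph{independent} sub-Gaussian entries). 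Theorem \ref{thm:formalCF} explicitly drops \textit{Assumption 3.1} and replaces it with \textit{Assumptions 3.3--3.4}; conditional sub-Gaussianity alone does not give Hanson--Wright-type concentration for these quadratic forms, so Lemma \ref{lemma:optm} is simply not available under the theorem's hypotheses. Checking $\xi^\top\Gamma^{-2}\xi\leq V$ with high probability, as you do, verifies only one hypothesis of Lemma \ref{lemma:optm}, not the design assumption that drives its proof. A second, smaller mismatch: the program for the causal parameter (the constrained form of \eqref{eq:optm_initial}) imposes $\sum_i\gamma_i=1$ and $\gamma_i\geq 0$, which the candidate $\gamma_i^*$ from Lemma \ref{lemma:optm} need not satisfy, so even under \textit{Assumption 3.1} that point would not certify feasibility of this program.

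The paper closes this hole differently, and this is exactly where overlap earns its keep: it exhibits the normalized propensity-odds weights
\[
\gamma_i^{**}=\frac{\frac{p(X_i)}{1-p(X_i)}}{\sum_{j\in\mathcal{S}_2}\frac{p(X_j)}{1-p(X_j)}}
\]
as a feasible point (Lemma \ref{lemma:pscore}). By the importance-weighting identity, the odds-weighted control moments of $g'(X^\top\hat\beta_{c1})X$ match the treated moments in expectation, so the balance constraint only requires concentration of \emph{linear} statistics of conditionally sub-Gaussian vectors plus a union bound — precisely what \textit{Assumption 3.3} delivers — while \textit{Assumption 3.4} bounds $p(X)/(1-p(X))$, giving $\gamma_i^{**}\asymp 1/n_c$ (hence the box constraint), $\sum_i\gamma_i^{**}=1$, $\gamma_i^{**}\geq 0$, and the $\ell_2$ bound $n_c\|\boldsymbol{\gamma}^{**}\|_2^2=O_p(1)$ that feeds the variance lemma (Lemma \ref{lemma:Bound4v_2}). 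Your intuition that overlap "controls the second moment of the implied control weights" is pointing at exactly this construction; the fix is to use those implied weights directly as the feasibility certificate rather than routing through Lemma \ref{lemma:optm}. Once feasibility and the variance bounds are established this way, the rest of your argument (bias term $O_p(k\log p/n_c)$, negligible Taylor remainders, Lyapunov CLT, Slutsky, and your independence argument for part (b)) goes through as in the paper.
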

Theorem \ref{thm:formalCF} establishes root-$n$ inference for the treatment effect without requiring the consistent estimation of the propensity score, and provides feasible standard errors.

\section{Simulation Study}
\setcounter{equation}{0}
\setcounter{theorem}{0}
\setcounter{assumption}{0}
\setcounter{proposition}{0}
\setcounter{corollary}{0}
\setcounter{lemma}{0}
\setcounter{example}{0}
\setcounter{remark}{0}
We study a two-stage model with $Y\sim\mathrm{Bernoulli}\{g(X^\top\beta_Y+D)\}$ and $D\sim\mathrm{Bernoulli}\{g(X^\top\beta_D)\}$, where $g(x)=\exp(x)/(1+\exp(x))$. Covariates are $X\sim N(0,\Sigma)$ with $\Sigma_{ij}=\rho^{|i-j|}$. 
The outcome coefficients are sparse, $(\beta_Y)_j\propto 1/j^2$ and the propensity coefficients are either sparse $(\beta_D)_j\propto 1/j^2$ or dense $(\beta_D)_j\propto 1/\sqrt{j}$. 
We compare our estimator with sample splitting (DB1), cross-fitting (DB2), and no splitting (DB6) to Na\"{i}ve, Regression with simple averaged residuals by \eqref{eq:mu_chat_reg} , IPW by \eqref{eq:mu_chat_weig}, DML by \eqref{eq:mu_chat_DL}, AML by \eqref{eq:mu_chat_AML}, and ARB. Performance is summarized by relative mean-squared error (MSE).\footnote{More designs, full tuning details, and additional metrics are reported in the Online Supplement.}

\begin{table}
\label{tab:RMSE_1}
  \caption{Mean‐squared error of different estimators}
  \begin{center}
    \begin{tabular*}{\textwidth}{@{}lcccccccc@{}}
      \hline\hline
      \multirow{3}{*}{Method}
        & \multicolumn{4}{c}{Sparse $\beta_D$}
        & \multicolumn{4}{c}{Dense $\beta_D$} \\
      \cline{2-5} \cline{6-9}
        & \multicolumn{2}{c}{$\|\beta_D\|_2=1$}
        & \multicolumn{2}{c}{$4$}
        & \multicolumn{2}{c}{$\|\beta_D\|_2 = 1$}
        & \multicolumn{2}{c}{$4$} \\
      \cline{2-3}\cline{4-5}\cline{6-7}\cline{8-9}
        & $\|\beta_Y\|_2=1$ & $4$
        & $1$ & $4$
        & $\|\beta_Y\|_2=1$ & $4$
        & $1$ & $4$ \\
      \hline
      Naïve      & 1.176 & 4.565 & 2.214 & 8.785 & 0.430 & 1.539 & 0.646 & 2.338 \\
      Regression & 0.162 & 0.224 & 0.262 & 0.411 & 0.112 & 0.165 & 0.144 & 0.210 \\
      IPW        & 0.196 & 0.630 & 0.373 & 1.222 & 0.162 & 0.487 & 0.227 & 0.711 \\
      DML1       & 0.164 & 0.255 & 0.330 & 0.431 & 0.236 & 0.305 & 0.494 & 0.698 \\
      DML2       & 0.113 & 0.153 & 0.290 & 0.339 & 0.123 & 0.201 & 0.631 & 0.369 \\
      AML        & 0.115 & 0.169 & 0.170 & 0.361 & 0.098 & 0.156 & 0.112 & 0.168 \\
      ARB        & 0.094 & 0.141 & 0.210 & 0.315 & 0.095 & 0.158 & 0.113 & 0.191 \\
      DB1        & 0.126 & 0.162 & 0.202 & 0.291 & 0.125 & 0.186 & 0.146 & 0.204 \\
      DB2        & 0.087 & 0.123 & 0.144 & 0.221 & 0.098 & 0.146 & 0.110 & 0.164 \\
      DB6        & 0.067 & 0.089 & 0.116 & 0.161 & 0.070 & 0.099 & 0.078 & 0.108 \\
      \hline\hline
    \end{tabular*}
  \end{center}
  \footnotesize
  \renewcommand{\baselineskip}{11pt}
  \textbf{Note:} 
 $n=500$, $p=800$, $\rho=0.5$, $n_{\mathrm{sim}}=1000$. DML1 = 2-fold sample-split; DML2 = 2-fold cross-fit. AML = 2-fold cross-fit. DB1 = proposed (2-fold sample-split), DB2 = proposed (2-fold cross-fit), DB6 = proposed (no split).
\end{table}
According to Table 1, DB6 attains the lowest MSE in all cells; DB2 is second. When the propensity model is complex (dense $\beta_D$), IPW and DML underperform the simple Regression imputation, consistent with instability from inverse weights. Estimators that avoid inverse weights such as ARB, AML, and ours, are more stable. However,  when the linear approximation is poor, as shown in the supplementary materials, ARB approach deteriorates due to misspecification. DB2 and DB6 outperform AML, reflecting our method’s explicit variance control in addition to bias reduction.

In addition, according to the coverage table in the supplementary materials, confidence intervals based on Theorem~\ref{thm:formalCF} achieve coverage close to 95\% and improve with $n$; DB6 has the shortest intervals.

\section{Empirical Illustration}
\setcounter{equation}{0}
\setcounter{theorem}{0}
\setcounter{assumption}{0}
\setcounter{proposition}{0}
\setcounter{corollary}{0}
\setcounter{lemma}{0}
\setcounter{example}{0}
\setcounter{remark}{0}
We reanalyse the NSW dataset of \citet{dehejia1999causal}. The treated group comprises 185 men randomized to job training between December 1975 and January 1978; the control group is a non-experimental PSID comparison sample of 2,490 men. We define the binary outcome $Y_i$ to indicate whether an individual’s 1978 earnings exceed zero (i.e.\ employment status). To adjust for confounding, we construct a covariate vector $X_i$ of 60 features, including age, years of education, indicators for Black and Hispanic, an indicator for marital status, and earnings in 1974 and 1975, together with their polynomial and interaction terms. We assume unconfoundedness conditional on $X_i$.
\begin{table}
    \caption{Point estimates of ATET in NSW application}
    \begin{center}
    \begin{tabular*}{\textwidth}{@{}lccccccc@{}}
    \hline 
    \hline
       Method & Na\"{i}ve  & Regression & IPW & DML2 & AML & ARB & DB6 \\
       \hline
       Estimates & -0.1585  & 0.2523 & 0.1174 & -0.5635 & 0.0587 & 0.1507  & 0.1122 \\
       \hline \hline
    \end{tabular*}
    \end{center}
    \label{tab:EmpIll}
    \footnotesize
  \renewcommand{\baselineskip}{11pt}
  \textbf{Note:} See note in Table 1.
\end{table}
Table \ref{tab:EmpIll} reports the estimates of ATET using different methods. 
For reference, the simple difference‐in‐means estimate within the randomized experiment is 
0.1106. Our estimator without splitting (DB6) is 0.1122, closest to this benchmark; IPW is 0.1174 and also close. The other approaches deviate more substantially from the experimental result. IPW performs well here because the logistic propensity appears well specified, as discussed by \citet{dehejia1999causal}, and trimming extreme weights has little effect on the estimate.

Moreover, the 95 \% confidence interval for our DB6 estimator is 
[0.0301,0.1944], nearly identical to the randomized‐experiment interval 
[0.0253,0.1959]. This suggests that our method delivers both accurate point estimates and valid inference when applied to non‐experimental data.

\section{Conclusion}
This paper proposes a debiased estimator for causal effects in high-dimensional GLMs with binary outcomes. The estimator augments a regularized GLM plug-in with weights from a single convex program that balances link-derivative–weighted covariates and controls variance, avoiding the inversion of estimated propensity scores. We establish $\sqrt n$-consistency, asymptotic normality, and a feasible variance estimator.
Our analysis focuses on ATET under unconfoundedness for $Y(0)$ and a GLM for $Y(0)\mid X$. Analogous estimators for ATE and ATEC are immediate, but identification then requires $D\!\perp\! Y(1)\mid X$ and a GLM for $Y(1)\mid X$. Sample splitting is used for theory; simulations indicate that the no-split implementation performs best, with cross-fitting close behind.

Compared to doubly robust strategies such as \citet{farrell2015robust}, \citet{belloni2017program}, and \citet{chernozhukov2018double} that utilize the estimation of both outcome and propensity score models, our method leans on the structural assumption of generalized linearity in the outcome model. 
While their methods offer robustness to misspecifications, our approach could serve as a viable alternative when generalized linearity is believed, which is not a overly restrictive assumption under the high-dimensional context, as it circumvents the instability of inverse propensity score estimates.

Two directions could be useful for future work: a data-driven choice of the bias–variance tuning parameter and extensions that relax the known-link assumption (e.g., single-index links) while retaining the balancing-and-variance optimization.

\bibliographystyle{chicago}
\bibliography{ref.bib}

\section*{Appendix A: Proofs of Results}
\renewcommand{\theequation}{A.\arabic{equation}}
\renewcommand{\thesection}{A}
\setcounter{equation}{0}
\setcounter{theorem}{0}
\setcounter{assumption}{0}
\setcounter{proposition}{0}
\setcounter{corollary}{0}
\setcounter{lemma}{0}
\setcounter{example}{0}
\setcounter{remark}{0}
\medskip
\textbf{Proof of Lemma 3.1: }
For each $j \in \{1,...,p\}$, let
\[\left[\sum_{i \in \mathcal{S}_2}\gamma^*_ig'(X_i^\top\hat \beta_{c1})X_i\right]_j = \frac{1}{n_c}\sum_{i\in \mathcal{S}_2} Q_i^\top A_j^i Q_i\]
where we define $A_j^i := g(X_i^\top \hat{\beta}_{c1}) \Sigma^{1/2}e_j \xi^\top \Gamma^{-1} \Sigma^{1/2}$ and $e_j$ denotes the $j$-th basis vector. 
Observe that $A_j^i$ is a rank-1 matrix and its Frobenius norm is given by
\[||A_j^i||_F^2 = \text{tr}\left[(g'(X_i^\top \hat{\beta}_{c1}))^2 \Sigma^{1/2}e_j\xi^\top \Gamma^{-1}\Sigma \Gamma^{-1}\xi e_j^\top \Sigma^{1/2}\right] \leq C,\]
where $C$ is a constant. This inequality holds for all $i \in \mathcal{S}_2$, since $g'(x)$ is bounded for all $x$ according to \textit{Assumption 2.3}, and due to \textit{Assumption 3.1} coupled with the boundness of $||\xi||_2$. 

Under \textit{Assumption 3.1}, and utilizing the Hanson-Wright inequality as described by \citet{rudelson2013hanson}, we can establish that the expression 
$Q_j^\top A_j^i Q_i$ possesses a sub-Exponential distribution. There exists some constants $C_5$ and $C_6$ such that
\[\E \left[exp\{t(Q_j^\top A_j^i Q_i - \E (Q_j^\top A_j^i Q_i))\}\right] \leq exp\{C_5t^2c^4C\} \text{ for all } t \leq \frac{C_6}{c^2\sqrt{C}}.\]
Given $\E[\sum_{i \in \mathcal{S}_2}\gamma^*_ig'(X_i^\top\hat \beta_{c1})X_i] = \xi$, for any sequence $t_n >0$ with $t_n^2/n_c \to 0$,
\[\E \left[exp\left\{\sqrt{n_c}t_n\left(\sum_{i \in \mathcal{S}_2}\tilde{\gamma}^*_iw_ig'(X_i^\top\hat \beta_{c1})X_i-\xi\right)_j\right\}\right] \leq exp\{C_5t^2c^4VC\} \]
holds if $n_c$ is sufficiently large. By setting $t_n:= \sqrt{\log (p/2\delta)}/(c^2\sqrt{C_5VS})$ and utilizing a symmetric argument, we arrive at
\[\mathbb{P} \left[\left|\sqrt{n_c}\left(\sum_{i \in \mathcal{S}_2} \gamma^*_ig'(X_i^\top\hat \beta_{c1})X_i-\xi\right)_j\right| > 2c^2\sqrt{C_5VC\log\left(\frac{p}{2\delta}\right)}\right] \leq \delta/p\]
Applying a union bound, we deduce
\[ \left\Vert \xi -  \sum_{\{i \in \mathcal{S}_2\}} \gamma_i^* g'(X_i^\top\hat \beta_{c1})X_i \right \Vert_\infty \lesssim  \sqrt{\log p/n_c}\]
which holds with probability approaching one. Additionally, by \textit{Assumption 3.1},
each $n_c\gamma_i^*$ follows a sub-Gaussian distribution with constant variance proxy, leading to $||n_c\boldsymbol{\gamma}^*||_\infty=O_p (\sqrt{\log n_c})$ and consequently $||\boldsymbol{\gamma}^*||_\infty=O_p (\frac{\sqrt{\log n_c}}{n_c})$.

\medskip

\textbf{Proof of Theorem 3.1: }
In the course of proving Theorem 3.1, we will rely on the subsequent Lemma \ref{lemma:riskbound}. 
\renewcommand{\thelemma}{A.\arabic{lemma}}
\renewcommand{\thesection}{A}
\setcounter{lemma}{0}
\begin{lemma}
    Under \textit{Assumption 2.1,2.2,3.2,3.3} and assuming that $k = O_p(n_c/\log p)$, the following inequalities hold:
    \begin{align*}
        ||\hat{\beta}_{c1}-\beta_{c}||_1 = O_p\left(k\sqrt{\frac{\log p}{n_c}}\right),\
        ||\hat{\beta}_{c1}-\beta_{c}||_2 = O_p\left(\sqrt{\frac{k\log p}{n_c}}\right).
    \end{align*}
    \label{lemma:riskbound}
\end{lemma}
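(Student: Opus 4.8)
The plan is to derive both rates through the standard penalized M-estimation framework for high-dimensional GLMs, following the template of \citet{negahban2012unified} specialized to the loss $\ell_g$ in \eqref{eq:lg}. Write $\delta := \hat{\beta}_{c1} - \beta_c$ and let $S$ be the support of $\beta_c$, so $|S| \leq k$ by \textit{Assumption 3.2}. Since $\hat{\beta}_{c1}$ minimizes $\ell_g(\beta) + \lambda_n\|\beta\|_1$ on $\mathcal{S}_1$, optimality yields the basic inequality $\ell_g(\hat{\beta}_{c1}) - \ell_g(\beta_c) \leq \lambda_n(\|\beta_c\|_1 - \|\hat{\beta}_{c1}\|_1)$. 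Convexity of $\ell_g$ together with the decomposability of the $\ell_1$ norm then confines $\delta$ to the cone $\|\delta_{S^c}\|_1 \leq 3\|\delta_S\|_1$, provided the penalty dominates the score, i.e.\ $\lambda_n \geq 2\|\nabla\ell_g(\beta_c)\|_\infty$. The whole argument rests on two ingredients: control of this empirical score at the truth, and a restricted strong convexity (RSC) lower bound.

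The second step is to show that the event $\|\nabla\ell_g(\beta_c)\|_\infty \leq \lambda_n/2$ holds with probability tending to one. A direct computation gives $\nabla\ell_g(\beta_c) = -n_c^{-1}\sum_{i\in\mathcal{S}_1}\varepsilon_i\, w_i X_i$ with curvature weights $w_i := g'(X_i^\top\beta_c)/[g(X_i^\top\beta_c)(1-g(X_i^\top\beta_c))]$; this is mean-zero because $\E[\varepsilon_i\mid X_i]=0$ by \textit{Assumption 2.2}. The weights $w_i$ are uniformly bounded under the link regularity conditions in \textit{Assumption 2.3}(b) and $\varepsilon_i\in[-1,1]$, so combined with the sub-Gaussian design of \textit{Assumption 3.3} each coordinate $(\nabla\ell_g(\beta_c))_j$ is an average of mean-zero sub-Gaussian summands. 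A sub-Gaussian tail bound followed by a union bound over the $p$ coordinates gives $\|\nabla\ell_g(\beta_c)\|_\infty = O_p(\sqrt{\log p/n_c})$, so the prescribed choice $\lambda_n \simeq \sqrt{\log p/n_c}$ dominates the score as needed.

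The third and most delicate step is the RSC bound. By a second-order Taylor expansion, $\ell_g(\beta_c+\delta)-\ell_g(\beta_c)-\langle\nabla\ell_g(\beta_c),\delta\rangle = \tfrac12\,\delta^\top\ell_g''(\bar\beta)\delta$ for some $\bar\beta$ on the segment between $\beta_c$ and $\hat{\beta}_{c1}$, and \textit{Assumption 2.3}(d) expresses $\ell_g''(\bar\beta) = n_c^{-1}\sum_i h(\bar\beta;y_i,X_i)X_iX_i^\top$ with $h>0$. The essential difficulty, absent in the linear model, is that the GLM Hessian is not uniformly positive definite: the curvature weight $h$ can decay as $|X_i^\top\bar\beta|$ grows. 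The key device is the Lipschitz-type control on $\log h$ in \textit{Assumption 2.3}(d), which transfers the curvature at $\beta_c$ to the neighborhood, ensuring $h(\bar\beta;\cdot)$ stays above a constant multiple of $h(\beta_c;\cdot)$ over the cone as long as $|X_i^\top\delta|$ is controlled. The sub-Gaussian design of \textit{Assumption 3.3} and the eigenvalue bounds on $\Sigma$ from \textit{Assumption 3.2} then supply a restricted-eigenvalue condition for $n_c^{-1}\sum_i X_iX_i^\top$, giving $\tfrac12\,\delta^\top\ell_g''(\bar\beta)\delta \geq \kappa\|\delta\|_2^2$ over the cone for some $\kappa>0$ with high probability; the scaling $k = O_p(n_c/\log p)$ keeps the neighborhood small enough that the curvature does not degrade.

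Combining the basic inequality, the cone constraint, and RSC in the usual way delivers $\|\delta\|_2 \lesssim \sqrt{k}\,\lambda_n/\kappa$, and then $\|\delta\|_1 \leq 4\|\delta_S\|_1 \leq 4\sqrt{k}\,\|\delta\|_2 \lesssim k\,\lambda_n/\kappa$. Substituting $\lambda_n \simeq \sqrt{\log p/n_c}$ produces the stated $\ell_2$ rate $O_p(\sqrt{k\log p/n_c})$ and $\ell_1$ rate $O_p(k\sqrt{\log p/n_c})$. I expect the RSC verification in the third step to be the main obstacle, specifically guaranteeing that the GLM curvature $h$ does not vanish over the estimation neighborhood; the Lipschitz bound on $\log h$ from \textit{Assumption 2.3}(d) is precisely the tool that resolves this and marks the departure from the linear-model analysis of \citet{athey2018approximate}.
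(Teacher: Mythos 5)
Your proposal is correct and follows essentially the same route as the paper, which omits the argument entirely and instead invokes the proof of Theorem 2 in \citet{cai2023statistical} (with their conditions (L2)--(L3) replaced by \textit{Assumption 2.3}); that cited proof is precisely the \citet{negahban2012unified}-style argument you reconstruct, with the score bound, the cone constraint, and the restricted strong convexity step secured through the log-Lipschitz control of the Hessian weights $h$ in \textit{Assumption 2.3}(d). The only caveat is that your RSC sketch compresses the truncation argument needed to handle the $|X_i^\top\beta_c|^2$ term in that Lipschitz bound (curvature can only be preserved on the event where $|X_i^\top\beta_c|$ is bounded, which holds for a constant fraction of observations), but this is exactly the device used in the cited proof, so your plan is faithful to it.
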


\begin{proof}
    The proof for Lemma \ref{lemma:riskbound} is omitted, as it can be derived from the logic presented in the proof of Theorem 2 by \citet{cai2023statistical}. 
However, it is vital to note the substitution of their assumptions (L2) and (L3) with our particular \textit{Assumption 2.3(b)} and \textit{2.3(c)}. 
We shall now advance to the proof of Theorem 3.1.
\end{proof}

    The estimation error for the estimator $\hat{\theta}$ can be expressed as
    \begin{equation}
        \hat{\theta} - \theta = \left(\xi -  \sum_{\{i \in \mathcal{S}_2\}} \gamma_i g'(X_i^\top\hat \beta_{c1})X_i\right)^\top(\hat{\beta}_{c1} - \beta_c)+\sum_{i\in \mathcal{S}_2} \gamma_i \Delta_i^c + \sum_{i\in \mathcal{S}_2} \gamma_i\varepsilon_i
    \end{equation}
    where $\Delta_i^c = g''(X_i^\top\hat \beta_{c1} + t_{i}^{c}X_i^\top(\beta-\hat \beta_{c1}))[X_i^\top(\hat \beta_{c1}-\beta)]^2$ for some $t_{i}^{c} \in (0,1)$.\\
    \\
    Because \textit{Assumption 3.3} is more relaxed than \textit{Assumption 3.1}, the estimation error bound of Lemma \ref{lemma:riskbound} also holds under \textit{Assumption 2.2-3.2}.\\
    \\
    (1) \textbf{Concerning the First Term}: We aim to demonstrate that the first term diminishes faster than $1/\sqrt{n_c}$.
    \medskip

    Lemma 3.1 provides
    $\left\Vert \xi -  \sum_{\{i \in \mathcal{S}_2\}} \gamma_i g'(X_i^\top\hat \beta_{c1})X_i\right\Vert_\infty = O_p \left(\sqrt{\frac{\log p}{n_c}}\right)$.
    With the assumptions and Lemma \ref{lemma:riskbound}, we have the $L_1$-risk bound for the lasso estimator
    $||\hat \beta_{c1}-\beta||_1 = O_p \left(k\sqrt{\frac{\log p}{n_c}}\right)$. Hence, we obtain
    \[\left|\left(\xi -  \sum_{\{i \in \mathcal{S}_2\}} \gamma_i g'(X_i^\top\hat \beta_{c1})X_i\right)^\top (\hat{\beta}_{c1} - \beta_c)\right| \leq O_p \left(\frac{k\log p}{n_c}\right),\]
    which vanishes at a rate faster than $1/\sqrt{n_c}$ given $k \ll \frac{\sqrt{n_c}}{\log p}$. \\
    \\
    (2) \textbf{Concerning the Second Term}: We aim to demonstrate that the second term decays faster than $1/\sqrt{n_c}$.
    \medskip
    
    By H\"older inequality, we have
    \[\left\vert\sum_{i\in \mathcal{S}_2} \gamma_i \Delta_i^c \right\vert \leq \max_{i\in \mathcal{S}_2} |\gamma_i| \cdot \sum_{i\in \mathcal{S}_2} |\Delta_i^c|\]
    Lemma 3.1 shows $ \max_{i\in \mathcal{S}_2} |\gamma_i| = O_p (\frac{\sqrt{\log n_c}}{n_c})$. 
    Using \textit{Assumption 2.3(c)}, it follows that
    \begin{equation*}
        \begin{aligned}
            \sum_{i\in \mathcal{S}_2} |\Delta_i^c| & \leq \sum_{i\in \mathcal{S}_2} \left|g''(X_i^\top\hat \beta_{c1} + t_{i}^{c}X_i^\top(\beta-\hat \beta_{c1}))\right|[X_i^\top(\hat \beta_{c1}-\beta)]^2 \\
            & \lesssim \sum_{i\in \mathcal{S}_2}[X_i^\top(\hat \beta_{c1}-\beta)]^2 =O_p \left(k\log p \right).
        \end{aligned}
    \end{equation*}
    because $\frac{1}{n_c}\sum_{i\in \mathcal{S}_2}[X_i^\top(\hat \beta_{c1}-\beta)]^2 = (\hat{\beta}_{c1} - \beta)^\top  (\frac{1}{n_c}\sum_{i\in \mathcal{S}_2}X_i X_i^\top) \cdot (\hat{\beta}_{c1} - \beta) \lesssim O_p (k\log p/n_c) $ by \textit{Assumption 3.1} and Lemma \ref{lemma:riskbound}. 
    Therefore, we can conclude that $\left\vert\sum_{i\in \mathcal{S}_2} \gamma_i \Delta_i^c \right\vert$ diminishes at a rate faster than $1/\sqrt{n_c}$ since $k \ll \frac{\sqrt{n_c}}{\log p\sqrt{\log n_c}}$. \\
    \\
    (3) \textbf{Concerning the Third Term}: We aim to demonstrate that the third term converges to a Gaussian distribution at a rate of $\sqrt{n_c}$. For this purpose, the following lemma, which is proved in Online Supplement, is required:

    \begin{lemma}
        Under the conditions of Theorem 1, it holds that $ v = O_p (1/n_c)$, $\hat{v} = O_p (1/n_c)$, and $|\hat{v} - v| = o_p (1/n_c)$.
    \label{lemma:Bound4v}
    \end{lemma}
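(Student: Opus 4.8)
The plan is to establish the three claims in sequence, obtaining $v = O_p(1/n_c)$ as a consequence of the first two. Throughout I would condition on the auxiliary split $\mathcal{S}_1$, so that $\hat\beta_{c1}$ (and hence $\Gamma$) is nonrandom relative to $\{X_i,Y_i\}_{i\in\mathcal{S}_2}$; this decoupling afforded by sample splitting is what makes the concentration steps clean.

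First, for $\hat v = O_p(1/n_c)$, I would exploit optimality of $\boldsymbol\gamma$. Since $\hat v$ is exactly the minimized objective in \eqref{eq:optm_dense} and the point $\gamma_i^* = \frac{1}{n_c}\xi^\top\Gamma^{-1}X_i$ is feasible with probability tending to one by Lemma \ref{lemma:optm}, we have $\hat v \le \sum_{i\in\mathcal{S}_2}(\gamma_i^*)^2 g(X_i^\top\hat\beta_{c1})(1-g(X_i^\top\hat\beta_{c1}))$. Bounding $g(1-g)\le 1/4$ and substituting $\gamma^*$ gives $\hat v \le \frac{1}{4n_c}\cdot\frac{1}{n_c}\sum_{i\in\mathcal{S}_2}(\xi^\top\Gamma^{-1}X_i)^2$. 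The inner average has conditional mean $\xi^\top\Gamma^{-1}\Sigma\Gamma^{-1}\xi \le \lambda_{\max}(\Sigma)\,\xi^\top\Gamma^{-2}\xi \le M_2 V = O(1)$ by Assumption 3.2 and the standing hypothesis $\xi^\top\Gamma^{-2}\xi\le V$; since $\xi^\top\Gamma^{-1}X_i$ is sub-Gaussian its square is sub-exponential, so the average concentrates at $O_p(1)$ and $\hat v = O_p(1/n_c)$.

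Second, the heart of the argument is $|\hat v - v| = o_p(1/n_c)$. Writing $w(x):=g(x)(1-g(x))$, whose derivative $w'(x)=g'(x)(1-2g(x))$ is bounded because $g'$ is bounded (Assumption 2.3), $w$ is Lipschitz with some constant $L$, whence
\[|\hat v - v| \le L\sum_{i\in\mathcal{S}_2}\gamma_i^2\,\bigl|X_i^\top(\hat\beta_{c1}-\beta_c)\bigr| \le L\Bigl(\max_{i\in\mathcal{S}_2}\bigl|X_i^\top(\hat\beta_{c1}-\beta_c)\bigr|\Bigr)\sum_{i\in\mathcal{S}_2}\gamma_i^2.\]
The box constraint $\max_i|\gamma_i|\le C_4\sqrt{\log n_c}/n_c$ yields $\sum_i\gamma_i^2 \le n_c(\max_i|\gamma_i|)^2 = O(\log n_c/n_c)$. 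For the maximum, sample splitting is decisive: conditional on $\mathcal{S}_1$, each $X_i^\top(\hat\beta_{c1}-\beta_c)=Q_i\Sigma^{1/2}(\hat\beta_{c1}-\beta_c)$ is sub-Gaussian with variance proxy at most $M_2\|\hat\beta_{c1}-\beta_c\|_2^2$, so a maximal inequality over the $n_c$ terms together with the $\ell_2$ rate of Lemma \ref{lemma:riskbound} gives $\max_{i\in\mathcal{S}_2}|X_i^\top(\hat\beta_{c1}-\beta_c)| = O_p(\sqrt{k\log p\,\log n_c/n_c})$. Multiplying the two bounds produces $|\hat v - v| = O_p((\log n_c)^{3/2}\sqrt{k\log p}/n_c^{3/2})$, which is $o_p(1/n_c)$ exactly when $k(\log n_c)^3\log p = o(n_c)$; this is implied by the sparsity condition $k\ll\sqrt{n_c}/(\log p\,\log n_c)$ of Theorem \ref{thm:dense}.

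Finally, $v = O_p(1/n_c)$ follows from the triangle inequality $v \le \hat v + |\hat v - v| = O_p(1/n_c) + o_p(1/n_c)$. The hard part will be the second step, where one must control $\sum_i\gamma_i^2$ and the worst-case design fluctuation $\max_i|X_i^\top(\hat\beta_{c1}-\beta_c)|$ \emph{without} relying on a lower bound for $g(1-g)$, which can vanish for large linear predictors. The two devices that keep the product below $1/n_c$ are using the box constraint (rather than attempting to invert $\hat v$) to bound $\sum_i\gamma_i^2$, and using the independence of $\hat\beta_{c1}$ from $\mathcal{S}_2$ to treat $X_i^\top(\hat\beta_{c1}-\beta_c)$ as conditionally sub-Gaussian; verifying the resulting rate is the one place where the slightly strengthened sparsity requirement is genuinely needed.
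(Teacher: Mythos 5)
Your proof is correct, and its central step takes a genuinely different route from the paper's. For $\hat v = O_p(1/n_c)$ you argue exactly as the paper does: optimality of $\boldsymbol{\gamma}$ against the feasible point $\gamma_i^* = \frac{1}{n_c}\xi^\top\Gamma^{-1}X_i$ from Lemma \ref{lemma:optm}, plus concentration of $\frac{1}{n_c}\sum_{i\in\mathcal{S}_2}(\xi^\top\Gamma^{-1}X_i)^2$. For $|\hat v - v| = o_p(1/n_c)$, however, the paper uses a \emph{multiplicative} bound, $|\hat v - v| \le \hat v \cdot \max_{i\in\mathcal{S}_2}\bigl|1 - g(X_i^\top\beta_c)(1-g(X_i^\top\beta_c))/\{g(X_i^\top\hat\beta_{c1})(1-g(X_i^\top\hat\beta_{c1}))\}\bigr|$, and must therefore prove a separate uniform ratio-convergence result for $g$ and $1-g$ over $\mathcal{S}_2$, leaning on \textit{Assumption 2.3(b)} and the symmetry $g(x)+g(-x)=1$ precisely because $g(1-g)$ can be arbitrarily small. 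You instead use the \emph{additive} Lipschitz bound on $w=g(1-g)$ (valid since $g'$ is bounded and $|1-2g|\le 1$), pay for it by bounding $\sum_i\gamma_i^2$ via the box constraint $\max_i|\gamma_i|\le C_4\sqrt{\log n_c}/n_c$ rather than via $\hat v$ itself, and absorb the resulting extra $\log n_c$ factor into the sparsity condition; your check that $k(\log n_c)^3\log p = o(n_c)$ follows from $k \ll \sqrt{n_c}/(\log p\,\log n_c)$ is right. So your argument is more elementary (no ratio lemma, no appeal to 2.3(b) at this step) at the cost of a rate looser by $\log n_c$, which is immaterial here. One caveat: the paper's proof of this lemma also establishes, via $\|\xi\|_\infty\ge\kappa$ and Cauchy--Schwarz, a \emph{lower} bound $\hat v \gtrsim 1/n_c$ (hence $v \gtrsim 1/n_c$). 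That bound is not part of the lemma's statement, so your proof is complete for what is asked, but it is what the proof of Theorem \ref{thm:dense} actually invokes to verify Lyapunov's condition ($\max_i|\gamma_i|/\sqrt{v}\lesssim\sqrt{\log n_c/n_c}$); if you later prove the theorem with your version of the lemma, you would need to supply that lower bound separately.
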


    By Lemma 3.1 and the feasibility of $\gamma_i$, we have 
    \begin{equation*}
        \max_{i \in \mathcal{S}_2} |\gamma_i| \lesssim O_p\left( \frac{\sqrt{\log n_c}}{n_c} \right),
    \end{equation*}
    Because $\gamma_i$ is independent of $\varepsilon_i$ conditional on $X_i$,
    \[\E\left[\sum_{i \in \mathcal{S}_2} \gamma_i \varepsilon_i | \{X_i\}_{i \in \mathcal{S}_2}\right] = 0,\]
    \[\text{var}\left[\sum_{i \in \mathcal{S}_2} \gamma_i \varepsilon_i | \{X_i\}_{i \in \mathcal{S}_2}\right] = \sum_{i \in \mathcal{S}_2} \gamma_i ^2 g(X_i^\top \beta_c)(1-g(X_i^\top \beta_c)) = v,\]
    \begin{equation*}
            \E \left[\sum_{i \in \mathcal{S}_2} \left(\gamma_i \varepsilon_i\right)^3 | \{X_i\}_{i \in \mathcal{S}_2}\right] \leq C_7 \sum_{i \in \mathcal{S}_2} \gamma_i^3 g(X_i^\top \beta_c)(1-g(X_i^\top \beta_c)) \\
    \lesssim \left(\max_{i\in \mathcal{S}_2}|\gamma_i|\right) v,
    \end{equation*}
    for some constant $C_7$, where the first inequality follows by sub-Gaussianity of $\varepsilon_i$ and $g(x) \in (0,1)$ for all $x$, and the last inequality follows because of the feasibility of $\boldsymbol{\gamma}$.
    So we have
    \[\frac{\E \left[\sum_{i \in \mathcal{S}_2} \left(\gamma_i g'(X_i^\top \hat{\beta}_{c1}) \varepsilon_i\right)^3 | \{X_i\}_{i \in \mathcal{S}_2}\right] }{\left\{\text{var}\left[\sum_{i \in \mathcal{S}_2} \gamma_i g'(X_i^\top \hat{\beta}_{c1}) \varepsilon_i | \{X_i\}_{i \in \mathcal{S}_2}\right]\right\}^{\frac{3}{2}}} \lesssim \frac{\left(\max_{i\in \mathcal{S}_2}|\gamma_i|\right)}{\sqrt{v}} \lesssim \sqrt{\frac{\log n_c}{n_c}}\ \to 0,\]
    where the last inequality follows by Lemma \ref{lemma:Bound4v}. Thus the Lyapunov's condition gives that
    \[\frac{\hat{\theta} - \theta}{\sqrt{v}} \Rightarrow N(0,1)\]
    In addition, Lemma \ref{lemma:Bound4v} tells that $\frac{\hat{\theta} - \theta}{\sqrt{\hat{v}}} \Rightarrow N(0,1)$ by Slutsky theorem.

\medskip
\textbf{Proof of Theorem 3.2: }
We need the following Lemma \ref{lemma:pscore}, which is proved in Online Supplement, to complete this proof.
    \begin{lemma}
        Under the conditions of Theorem 3.2, the weights $\gamma^{**}$ defined by
        \[\boldsymbol{\gamma}^{**} = \frac{\frac{p(X_i)}{1-p(X_i)}}{\sum_{i \in \mathcal{S}_2}\frac{p(X_i)}{1-p(X_i)}},\]
        satisfy the following bounds with probability tending to 1:
        \[\left\Vert \frac{1}{n_t}\sum_{\{i:D_i=1\}} g'(X_i^\top\hat \beta_{c1})X_i^\top -  \sum_{\{i \in \mathcal{S}_2\}} \gamma_i^{**} g'(X_i^\top\hat \beta_{c1})X_i^\top \right \Vert_\infty \lesssim \sqrt{\log p/n_c},\]
        \[n_c||\boldsymbol{\gamma}^{**}||_2^2 \leq \frac{1}{\rho^2}\E\left[\frac{p(X_i)^2}{(1-p(X_i))^2}|D_i = 0\right] + o_p(1).\]
    
        \label{lemma:pscore}
    \end{lemma}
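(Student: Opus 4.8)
The plan is to show that the inverse-odds weights $\boldsymbol{\gamma}^{**}$ constitute a feasible point for the constrained form of \eqref{eq:optm_initial}, playing the same role here that $\gamma_i^*$ played in Lemma \ref{lemma:optm}, but now exploiting a propensity-tilting identity rather than the $\Gamma^{-1}$ construction. The entire argument is carried out conditionally on $\hat\beta_{c1}$: since $\hat\beta_{c1}$ is fit on $\mathcal{S}_1$ while both the treated units and $\mathcal{S}_2$ are disjoint from $\mathcal{S}_1$, the map $f(x):=g'(x^\top\hat\beta_{c1})x$ may be treated as a fixed measurable function, and $\{X_i\}_{i\in\mathcal{S}_2}$, $\{X_i\}_{D_i=1}$ are independent samples from the control and treated conditional laws. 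The key observation is that, under unconfoundedness (\textit{Assumption 2.1}) and overlap (\textit{Assumption 3.4}), the treated average $\frac{1}{n_t}\sum_{D_i=1}f(X_i)$ and the normalized control-side weighted average $\sum_{i\in\mathcal{S}_2}\gamma_i^{**}f(X_i)$ target the \emph{same} population quantity $\E[f(X)\mid D=1]$, coordinate by coordinate.

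For the balance bound I would first record the tilting identity. Writing the odds as $w(x)=p(x)/(1-p(x))$, the calculation $\E[w(X)f(X)\mid D=0]\,\mathbb{P}(D=0)=\E[w(X)(1-p(X))f(X)]=\E[p(X)f(X)]=\E[f(X)\mid D=1]\,\mathbb{P}(D=1)$, combined with $\E[w(X)\mid D=0]=\mathbb{P}(D=1)/\mathbb{P}(D=0)$, shows that $\E[w(X)f(X)\mid D=0]/\E[w(X)\mid D=0]=\E[f(X)\mid D=1]$, so the population versions of the two averages coincide. It then suffices to bound the stochastic fluctuation of each side about this common mean in the $\ell_\infty$ norm. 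By \textit{Assumption 2.3} $g'$ is uniformly bounded, by \textit{Assumption 3.4} the odds satisfy $w(X)\in[\epsilon/(1-\epsilon),(1-\epsilon)/\epsilon]$, and by \textit{Assumption 3.3} each coordinate $X_{ij}$ is sub-Gaussian; hence $f_j(X_i)=g'(X_i^\top\hat\beta_{c1})X_{ij}$ and $w(X_i)f_j(X_i)$ are sub-Gaussian with parameters uniform in $j$ (and, conditionally, in $\hat\beta_{c1}$). A standard sub-Gaussian maximal inequality with a union bound over $j=1,\dots,p$ then gives that each of the two deviations is $O_p(\sqrt{\log p/n_c})$, the self-normalizing denominator $\frac{1}{n_c}\sum_{\mathcal{S}_2}w(X_i)$ concentrating about the strictly positive constant $\E[w\mid D=0]$ so that dividing by it preserves the rate. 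The triangle inequality, together with $n_t\simeq n_c\simeq n$ from the odds converging to $\rho\in(0,\infty)$, delivers the stated bound.

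For the variance bound I would write
\[ n_c\|\boldsymbol{\gamma}^{**}\|_2^2 = \frac{\frac{1}{n_c}\sum_{i\in\mathcal{S}_2}w(X_i)^2}{\left(\frac{1}{n_c}\sum_{i\in\mathcal{S}_2}w(X_i)\right)^2}. \]
Since $w$ is bounded under overlap, the law of large numbers gives $\frac{1}{n_c}\sum_{\mathcal{S}_2}w(X_i)^2\to\E[p(X)^2/(1-p(X))^2\mid D=0]$ and $\frac{1}{n_c}\sum_{\mathcal{S}_2}w(X_i)\to\E[w(X)\mid D=0]=\mathbb{P}(D=1)/\mathbb{P}(D=0)\to\rho$. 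Slutsky's theorem (continuous mapping on the ratio) then yields $n_c\|\boldsymbol{\gamma}^{**}\|_2^2\to\rho^{-2}\,\E[p(X)^2/(1-p(X))^2\mid D=0]$, which is exactly the claimed limit with the discrepancy absorbed into $o_p(1)$.

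The main obstacle is the balance bound rather than the variance bound: one must control the two $\ell_\infty$ fluctuations uniformly over all $p$ coordinates while carrying the estimated, random $\hat\beta_{c1}$ through $g'(X^\top\hat\beta_{c1})$ and simultaneously handling the self-normalization of the weights. Because $g'$ is bounded the summands are sub-Gaussian, so the maximal inequality itself is standard; the genuine care lies in (i) transferring the random denominator's fluctuation into the numerator's rate via a ratio/linearization step, and (ii) verifying that the conditional sub-Gaussian parameters are uniform in $j$ and in $\hat\beta_{c1}$ so that the union bound is legitimate. Once overlap pins the denominator away from zero, the variance bound follows immediately from the law of large numbers and Slutsky's theorem.
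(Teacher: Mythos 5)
Your proposal is correct and follows essentially the same route as the paper's proof: both use the inverse-odds weights' propensity-tilting identity to align the treated mean and the weighted control mean, apply sub-Gaussian concentration with a union bound over the $p$ coordinates (valid since $g'$ is bounded, the odds are bounded by overlap, and the argument is conditional on the $\mathcal{S}_1$-fitted $\hat\beta_{c1}$), control the self-normalizing denominator $\frac{1}{n_c}\sum_{i\in\mathcal{S}_2}p(X_i)/(1-p(X_i))$ through its concentration around $\rho$, and derive the variance bound from the ratio $b_2/b_1^2$ with a law of large numbers. The only difference is bookkeeping: the paper exploits translation invariance of the balance discrepancy to re-center so that $\E[g'(X^\top\hat\beta_{c1})X_{j}\mid D=1]=0$, which makes both sample averages mean-zero before the union bound, whereas you keep the original centering and absorb the denominator's fluctuation via a ratio linearization around the population target — both devices resolve the same technical point, namely that the un-centered weighted numerator has an $O(1)$ mean so a naive triangle-inequality step alone would not deliver the $\sqrt{\log p/n_c}$ rate.
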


    We can find a feasible weight to the optimization problem:
    \[\gamma_i^{**} = \frac{\frac{p(X_i)}{1-p(X_i)}}{\sum_{i \in \mathcal{S}_2}\frac{p(X_i)}{1-p(X_i)}}.\]
    By the definition and overlap assumption, it is trivial to show that $\sum_{i=1}^n \gamma_i^{**} = 1$ and $0 < \gamma^{**}_i \lesssim O_p(\sqrt{\log n_c}/n_c)$. The feasibility follows by Lemma \ref{lemma:pscore}. Then to show Theorem 3.2, we can use similar arguments in Theorem 3.1. 

        The estimation error for the estimator $\hat{\mu}$ can be decomposed as 
    \begin{equation}
        \begin{aligned}
            \hat{\mu}_c - \mu_c & = \left(\frac{1}{n_t}\sum_{\{i:D_i=1\}} g'(X_i^\top\hat \beta_{c1})X_i -  \sum_{\{i \in \mathcal{S}_2\}} \gamma_i g'(X_i^\top\hat \beta_{c1})X_i \right)^\top (\hat{\beta}_{c1} - \beta_c)\\
            &+\sum_{i\in \mathcal{S}_2} \gamma_i \Delta_i^c 
            + \frac{1}{n_t} \sum_{\{i:D_i=1\}} \Delta_i^t  + \sum_{i\in \mathcal{S}_2} \gamma_i\varepsilon_i
        \end{aligned}
    \end{equation}
    where $\Delta_i^c = g''(X_i^\top\hat \beta_{c1} + t_{i}^{c}X_i^\top(\beta-\hat \beta_{c1}))[X_i^\top(\hat \beta_{c1}-\beta)]^2$ for some $t_{i}^{c} \in (0,1)$ and 
    $\Delta_i^t = g''(X_i^\top\hat \beta_{c1} + t_{i}^{t}X_i^\top(\beta-\hat \beta_{c1}))[X_i^\top(\hat \beta_{c1}-\beta)]^2$ for some $t_{i}^{t} \in (0,1)$.\\
    \\
    (1) \textbf{Concerning the First Term}: We aim to demonstrate that the first term diminishes faster than $1/\sqrt{n_c}$.
    \medskip

    Lemma \ref{lemma:pscore} implies the feasibility of the optimization problem and thus
    $$\left\Vert \frac{1}{n_t}\sum_{\{i:D_i=1\}} g'(X_i^\top\hat \beta_{c1})X_i -  \sum_{\{i \in \mathcal{S}_2\}} \gamma_i g'(X_i^\top\hat \beta_{c1})X_i\right\Vert_\infty = O_p \left(\sqrt{\frac{\log p}{n_c}}\right).$$
    With the assumptions and Lemma \ref{lemma:riskbound}, we have the $L_1$-risk bound for the lasso estimator
    $||\hat \beta_{c1}-\beta||_1 = O_p \left(k\sqrt{\frac{\log p}{n_c}}\right)$. Hence, we obtain
    \[\left|\left(\frac{1}{n_t}\sum_{\{i:D_i=1\}} g'(X_i^\top\hat \beta_{c1})X_i -  \sum_{\{i \in \mathcal{S}_2\}} \gamma_i g'(X_i^\top\hat \beta_{c1})X_i\right)^\top (\hat{\beta}_{c1} - \beta_c)\right| \leq O_p \left(\frac{k\log p}{n_c}\right),\]
    which vanishes at a rate faster than $1/\sqrt{n_c}$ given $k \ll \frac{\sqrt{n_c}}{\log p}$. \\
    \\
    (2) \textbf{Concerning the Second Term}: We aim to show that the second term decays faster than $1/\sqrt{n_c}$.
    \medskip
    
    By H\"older inequality, we have
    \[\left\vert\sum_{i\in \mathcal{S}_2} \gamma_i \Delta_i^c \right\vert \leq \max_{i\in \mathcal{S}_2} |\gamma_i| \cdot \sum_{i\in \mathcal{S}_2} |\Delta_i^c|\]
    The feasibility of the optimization problem implies $ \max_{i\in \mathcal{S}_2} |\gamma_i| = O_p (\frac{\sqrt{\log n_c}}{n_c})$. 
    Using \textit{Assumption 2.3(c)}, it follows that
    \begin{equation*}
        \begin{aligned}
            \sum_{i\in \mathcal{S}_2} |\Delta_i^c| & \leq \sum_{i\in \mathcal{S}_2} \left|g''(X_i^\top\hat \beta_{c1} + t_{i}^{c}X_i^\top(\beta-\hat \beta_{c1}))\right|[X_i^\top(\hat \beta_{c1}-\beta)]^2 \\
            & \lesssim \sum_{i\in \mathcal{S}_2}[X_i^\top(\hat \beta_{c1}-\beta)]^2 =O_p \left(k\log p \right).
        \end{aligned}
    \end{equation*}
    because $\frac{1}{n_c}\sum_{i\in \mathcal{S}_2}[X_i^\top(\hat \beta_{c1}-\beta)]^2 = (\hat{\beta}_{c1} - \beta)^\top  (\frac{1}{n_c}\sum_{i\in \mathcal{S}_2}X_i X_i^\top) \cdot (\hat{\beta}_{c1} - \beta) \lesssim O_p (k\log p/n_c) $ by \textit{Assumption 3.3} and Lemma \ref{lemma:riskbound}. 
    Therefore, we can conclude that $\left\vert\sum_{i\in \mathcal{S}_2} \gamma_i \Delta_i^c \right\vert$ diminishes at a rate faster than $1/\sqrt{n_c}$ since $k \ll \frac{\sqrt{n_c}}{\log p\sqrt{\log n_c}}$. \\
    \\
    (3) \textbf{Concerning the Third Term}: We aim to demonstrate that the third term decays faster than $1/\sqrt{n_c}$.
    \medskip
    
    By Triangle inequality, we have
    \[\left\vert\sum_{\{i:D_i=1\}} \Delta_i^t \right\vert \leq \sum_{\{i:D_i=1\}} |\Delta_i^t|\]
    Using \textit{Assumption A 3.3}, it follows that
    \begin{equation*}
        \begin{aligned}
            \sum_{\{i:D_i=1\}} |\Delta_i^t| & \leq \sum_{\{i:D_i=1\}} \left|g''(X_i^\top\hat \beta_{c1} + t_{i}^{t}X_i^\top(\beta-\hat \beta_{c1}))\right|[X_i^\top(\hat \beta_{c1}-\beta)]^2 \\
            & \lesssim \sum_{\{i:D_i=1\}}[X_i^\top(\hat \beta_{c1}-\beta)]^2 =O_p \left(k\log p \right).
        \end{aligned}
    \end{equation*}
    because $\frac{1}{n_t}\sum_{\{i:D_i=1\}}[X_i^\top(\hat \beta_{c1}-\beta)]^2 = (\hat{\beta}_{c1} - \beta)^\top  (\frac{1}{n_t}\sum_{\{i:D_i=1\}}X_i X_i^\top) \cdot (\hat{\beta}_{c1} - \beta) \lesssim O_p (k\log p/n_c) $ by \textit{Assumptions 3.3-3.4} and Lemma \ref{lemma:riskbound}. 
    Therefore, we can conclude that $\left\vert\sum_{\{i:D_i=1\}} \Delta_i^t \right\vert$ diminishes at a rate faster than $1/\sqrt{n_c}$ since $k \ll \frac{\sqrt{n_c}}{\log p}$. \\
    \\
    (4) \textbf{Concerning the Fourth Term}: We aim to demonstrate that the fourth term converges to a Gaussian distribution at the rate of $\sqrt{n_c}$.
    \medskip

    The feasibility of $\gamma_i$, we have 
    \begin{equation*}
        \max_{i \in \mathcal{S}_2} |\gamma_i| \lesssim O_p\left( \frac{\sqrt{\log n_c}}{n_c} \right),
    \end{equation*}
    Because $\gamma_i$ is independent of $\varepsilon_i$ conditional on $X_i$,
    \[\E\left[\sum_{i \in \mathcal{S}_2} \gamma_i \varepsilon_i | \{X_i\}_{i \in \mathcal{S}_2}\right] = 0,\]
    \[\text{var}\left[\sum_{i \in \mathcal{S}_2} \gamma_i \varepsilon_i | \{X_i\}_{i \in \mathcal{S}_2}\right] = \sum_{i \in \mathcal{S}_2} \gamma_i ^2 g(X_i^\top \beta_c)(1-g(X_i^\top \beta_c)) = v,\]
    \begin{equation*}
            \E \left[\sum_{i \in \mathcal{S}_2} \left(\gamma_i \varepsilon_i\right)^3 | \{X_i\}_{i \in \mathcal{S}_2}\right] \leq C_7 \sum_{i \in \mathcal{S}_2} \gamma_i^3 g(X_i^\top \beta_c)(1-g(X_i^\top \beta_c)) \\
    \lesssim \left(\max_{i\in \mathcal{S}_2}|\gamma_i|\right) v,
    \end{equation*}
    for some constant $C_7$, where the first inequality follows by sub-Gaussianity of $\varepsilon_i$ and $g(x) \in (0,1)$ for all $x$, and the last inequality follows because of the feasibility of $\boldsymbol{\gamma}$.
    So we have
    \[\frac{\E \left[\sum_{i \in \mathcal{S}_2} \left(\gamma_i g'(X_i^\top \hat{\beta}_{c1}) \varepsilon_i\right)^3 | \{X_i\}_{i \in \mathcal{S}_2}\right] }{\left\{\text{var}\left[\sum_{i \in \mathcal{S}_2} \gamma_i g'(X_i^\top \hat{\beta}_{c1}) \varepsilon_i | \{X_i\}_{i \in \mathcal{S}_2}\right]\right\}^{\frac{3}{2}}} \lesssim \frac{\left(\max_{i\in \mathcal{S}_2}|\gamma_i|\right)}{\sqrt{v}} \lesssim \sqrt{\frac{\log n_c}{n_c}}\ \to 0,\]
    where the last inequality follows by Lemma \ref{lemma:Bound4v_2}, which is proved in Online Supplement:
    \begin{lemma}
        Under the conditions of Theorem 3.2, it holds that $v = O_p(1/n_c)$, $\hat{v} = O_p(1/n_c)$ and $|\hat{v}-v| = o_p(1/n_c)$.
        \label{lemma:Bound4v_2}
    \end{lemma}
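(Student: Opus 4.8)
\textbf{Proof proposal for Lemma \ref{lemma:Bound4v_2}.}
Since the three conclusions coincide verbatim with those of Lemma \ref{lemma:Bound4v}, my plan is to mirror that proof, making the single structural change that feasibility of the program and the sharp control of the variance objective now come from the propensity-based weights $\boldsymbol{\gamma}^{**}$ of Lemma \ref{lemma:pscore} rather than from the deterministic construction $\gamma_i^{*}$ of Lemma \ref{lemma:optm}. I would proceed in three steps: (i) an upper bound $\hat v = O_p(1/n_c)$ from optimality of the program; (ii) a bound $|\hat v - v| = o_p(1/n_c)$ via Lipschitz continuity of the conditional-variance weight; and (iii) the conclusion $v = O_p(1/n_c)$ by the triangle inequality.

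For step (i), I would note that under the constrained form of \eqref{eq:optm_initial} the quantity $\hat v = \sum_{i\in\mathcal{S}_2}\gamma_i^2 g(X_i^\top\hat\beta_{c1})(1-g(X_i^\top\hat\beta_{c1}))$ is exactly the objective that the optimal $\boldsymbol{\gamma}$ minimizes. By Lemma \ref{lemma:pscore} the weights $\boldsymbol{\gamma}^{**}$ are feasible with probability tending to one, so optimality yields $\hat v \le \sum_{i\in\mathcal{S}_2}(\gamma_i^{**})^2 g(X_i^\top\hat\beta_{c1})(1-g(X_i^\top\hat\beta_{c1}))$. Using $g(u)(1-g(u))\le 1/4$ for all $u$, the right-hand side is at most $\tfrac14\|\boldsymbol{\gamma}^{**}\|_2^2$; combined with the second bound of Lemma \ref{lemma:pscore}, namely $n_c\|\boldsymbol{\gamma}^{**}\|_2^2 \le \rho^{-2}\E[p(X_i)^2/(1-p(X_i))^2\mid D_i=0]+o_p(1)=O_p(1)$ (the expectation being finite by the overlap condition in \textit{Assumption 3.4}), this gives $\hat v = O_p(1/n_c)$.

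For step (ii), I would set $\psi(u):=g(u)(1-g(u))$, so that $\psi'(u)=g'(u)(1-2g(u))$; since $g\in(0,1)$ and $g'$ is bounded by \textit{Assumption 2.3(a)}, $\psi$ is Lipschitz with some constant $L$. Hence
\[|\hat v - v| \;\le\; L\sum_{i\in\mathcal{S}_2}\gamma_i^2\,|X_i^\top(\hat\beta_{c1}-\beta_c)| \;\le\; L\Big(\max_{i\in\mathcal{S}_2}|X_i^\top(\hat\beta_{c1}-\beta_c)|\Big)\sum_{i\in\mathcal{S}_2}\gamma_i^2 .\]
Because $\gamma_i\ge0$ and $\sum_{i\in\mathcal{S}_2}\gamma_i=1$, one has $\sum_{i\in\mathcal{S}_2}\gamma_i^2\le\|\boldsymbol{\gamma}\|_\infty=O_p(\sqrt{\log n_c}/n_c)$ by feasibility, while $\max_{i\in\mathcal{S}_2}|X_i^\top(\hat\beta_{c1}-\beta_c)|\le(\max_{i\in\mathcal{S}_2}\|X_i\|_\infty)\,\|\hat\beta_{c1}-\beta_c\|_1$. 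A sub-Gaussian maximal inequality under \textit{Assumption 3.3}, applied conditionally on $\mathcal{S}_1$ (legitimate since $\hat\beta_{c1}$ is independent of $\mathcal{S}_2$ by the sample split), gives $\max_{i\in\mathcal{S}_2}\|X_i\|_\infty=O_p(\sqrt{\log(n_cp)})$, and Lemma \ref{lemma:riskbound} gives $\|\hat\beta_{c1}-\beta_c\|_1=O_p(k\sqrt{\log p/n_c})$. Combining, $|\hat v - v| = \tfrac{1}{n_c}\,O_p\!\big(k\sqrt{\log p/n_c}\,\sqrt{\log n_c\,\log(n_cp)}\big)$, which is $o_p(1/n_c)$ under the sparsity rate $k\ll \sqrt{n_c}/(\log p\,\log n_c)$.

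Step (iii) is then immediate: $v \le \hat v + |\hat v - v| = O_p(1/n_c)+o_p(1/n_c)=O_p(1/n_c)$. I expect the main obstacle to be step (ii): keeping the product of the $\ell_1$ estimation-error rate, the maximal covariate bound, and the weight sup-norm below $1/n_c$ is precisely what forces the sparsity requirement, and care is needed to justify the conditional maximal inequality using the independence afforded by the sample split. By contrast, the random target $\xi=\tfrac{1}{n_t}\sum_{i:D_i=1}g'(X_i^\top\hat\beta_{c1})X_i$ never enters the variance bounds directly; its randomness is absorbed entirely into the feasibility statement of Lemma \ref{lemma:pscore}.
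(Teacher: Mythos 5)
Your three-step architecture for the upper bounds is sound and essentially parallels the paper's own proof: the paper likewise bounds $\hat v$ by the objective value at the feasible propensity-based weights, writing $\hat v \le \hat v^{**} := \sum_{i\in\mathcal{S}_2}(\gamma_i^{**})^2 g(X_i^\top\hat\beta_{c1})(1-g(X_i^\top\hat\beta_{c1})) \le \left[\max_{i\in\mathcal{S}_2} g(X_i^\top\hat\beta_{c1})(1-g(X_i^\top\hat\beta_{c1}))\right]\|\boldsymbol{\gamma}^{**}\|_2^2 \lesssim O_p(1/n_c)$ via Lemma \ref{lemma:pscore}, and then gets $v=O_p(1/n_c)$ from the difference bound. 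Your treatment of $|\hat v - v|$ is a legitimate variation: the paper bounds it by $\hat v \cdot \max_{i\in\mathcal{S}_2}\bigl|1 - g(X_i^\top\beta_c)(1-g(X_i^\top\beta_c))/\{g(X_i^\top\hat\beta_{c1})(1-g(X_i^\top\hat\beta_{c1}))\}\bigr|$ and shows the ratio factor is $o_p(1)$ by a Taylor expansion and Assumption 2.3(b), whereas you use Lipschitz continuity of $u\mapsto g(u)(1-g(u))$ together with $\sum_i\gamma_i^2\le\|\boldsymbol{\gamma}\|_\infty$ and the H\"older bound $\max_i|X_i^\top(\hat\beta_{c1}-\beta_c)|\le(\max_i\|X_i\|_\infty)\|\hat\beta_{c1}-\beta_c\|_1$. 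Your route costs a factor $k$ where the paper's (which goes through the $\ell_2$ risk bound and sub-Gaussianity of $X_i^\top(\hat\beta_{c1}-\beta_c)$) costs $\sqrt{k}$, but your rate calculation is correct and the slack is absorbed by $k\ll\sqrt{n_c}/(\log p\log n_c)$; your version also avoids controlling the denominator of the paper's ratio.

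The genuine gap: your proof never uses the condition $\|\E[g'(X^\top\hat\beta_{c1})X\mid D=1]\|_\infty\ge\kappa$ and, correspondingly, delivers only upper bounds. The paper's proof of this lemma devotes its entire first half to a matching \emph{lower} bound: since some coordinate $j$ has $|\E[g'(X^\top\hat\beta_{c1})X\mid D=1]|_j\ge\kappa$, the balance constraint forces $\left[\sum_{i\in\mathcal{S}_2}\gamma_i g'(X_i^\top\hat\beta_{c1})X_i\right]_j^2\ge\kappa^2/2$, so by Cauchy--Schwarz $\sum_i[\gamma_i g'(X_i^\top\hat\beta_{c1})]^2\ge\kappa^2/(2\sum_i X_{ij}^2)\gtrsim_p 1/n_c$, and since $[g'(u)]^2/\{g(u)(1-g(u))\}\le C$ by Assumption 2.3, this yields $\hat v\gtrsim_p 1/n_c$ and hence $v\gtrsim_p 1/n_c$. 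This is not cosmetic: in the proof of Theorem 3.2, Lemma \ref{lemma:Bound4v_2} is cited precisely to justify $\max_{i\in\mathcal{S}_2}|\gamma_i|/\sqrt{v}\lesssim\sqrt{\log n_c/n_c}$ in the Lyapunov ratio, and that step needs $v\gtrsim_p 1/n_c$, not $v=O_p(1/n_c)$. So while your argument does prove the literal $O_p$/$o_p$ claims as stated, as a replacement for the paper's proof it would leave the CLT in Theorem 3.2 unsupported; your closing remark that the target enters only through feasibility is exactly where this goes astray --- the $\kappa$ condition on the target is what drives the variance lower bound.
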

    Thus the Lyapunov's condition gives that
    \[\frac{\hat{\theta} - \theta}{\sqrt{v}} \Rightarrow N(0,1)\]
    In addition, Lemma \ref{lemma:Bound4v_2} tells that $\frac{\hat{\theta}-\theta}{\sqrt{\hat{v}}} \Rightarrow N(0,1)$ by Slutsky theorem.

\clearpage

\renewcommand{\theequation}{S.\arabic{equation}}
\renewcommand{\thetable}{S.\arabic{table}}
\renewcommand{\thesection}{S\arabic{section}}
\renewcommand{\thepage}{S\arabic{page}}

\setcounter{equation}{0}
\setcounter{section}{0}
\setcounter{page}{1}

\vspace{0.2cm}
\begin{center}
\textbf{\large
    Online Supplement to ``Causal Inference in High‐dimensional Generalized Linear Models with Binary Outcomes"}
\end{center}
\vspace{0.2cm}

\section{ADDITIONAL PROOFS}
\renewcommand{\theequation}{S.\arabic{equation}}
\setcounter{equation}{0}
\medskip
\textbf{Proof of Lemma A.3: }
We first want to show that
    \[\max_{i \in \mathcal{S}_2} \left|\frac{g(|X_i^\top\beta_c|)}{g(|X_i^\top\hat{\beta}_{c1}|)} - 1\right| = o_p(1), \quad \max_{i \in \mathcal{S}_2} \left|\frac{1-g(|X_i^\top\beta_c|)}{1-g(|X_i^\top\hat{\beta}_{c1}|)} - 1\right| = o_p(1)\]
    By Taylor expansion, we have 
    \[g(|X_i^\top\beta_c|) = g(|X_i^\top\hat{\beta}_{c1}|) + g'(a)\left|X_i^\top (\hat{\beta}_{c1} - \beta_c)\right|,\]
    where $a$ is between $|X_i^\top\beta_c|$ and $|X_i^\top\hat{\beta}_{c1}|,$
    So, by \textit{Assumption 2.3(b)}, we can get
    \[\max_{i \in \mathcal{S}_2} \left|\frac{g(|X_i^\top\beta_c|)}{g(|X_i^\top\hat{\beta}_{c1}|)} - 1\right| \lesssim \max_{i \in \mathcal{S}_2}\left|X_i^\top (\hat{\beta}_{c1} - \beta_c)\right| \lesssim O_p \left(\sqrt{\frac{k\log p \log n_c}{n_c}}\right),\]
    and thus it goes to zero by $k \ll \frac{n_c}{\log p \log n_c}$. In addition, the other equality follows by the same argument.

    Because there exists an index $j \in \{1,...,p\}$ such that $|\xi_j| \geq \kappa$, any feasible solution to the optimization problem (19) must eventually satisfy $\left[\sum_{i \in \mathcal{S}_2} \gamma_i g'(X_i^\top\hat{\beta}_{c1})X_i\right]^2_j \geq \kappa^2/2$.
    By Cauchy-Schwarz inequality, we have
            \[\sum_{i \in \mathcal{S}_2} \left[\gamma_i g'(X_i^\top\hat{\beta}_{c1})\right]^2 \geq \kappa^2/\left(2\sum_{i \in \mathcal{S}_2} X_{ij}^2\right) = O_p \left(\frac{1}{n_c}\right)\] 
            Thus we have
            \[O_p \left(\frac{1}{n_c}\right) \leq \sum_{i \in \mathcal{S}_2} \left[\gamma_i g'(X_i^\top\hat{\beta}_{c1})\right]^2 = \sum_{i \in \mathcal{S}_2} \gamma_i^2 g(X_i^\top\hat{\beta}_{c1})(1-g(X_i^\top\hat{\beta}_{c1}))\frac{[g'(X_i^\top\hat{\beta}_{c1})]^2}{g(X_i^\top\hat{\beta}_{c1})(1-g(X_i^\top\hat{\beta}_{c1}))}, \]
    Let that \[\hat{v} := \sum_{i \in \mathcal{S}_2} \gamma_i^2 g(X_i^\top\hat{\beta}_{c1})(1-g(X_i^\top\hat{\beta}_{c1})),\] and it is upper bounded by
    \[\hat{v} \leq \hat{v}^*:= \sum_{i \in \mathcal{S}_2} (\tilde{\gamma_i}^*)^2 w_i^2 g(X_i^\top\hat{\beta}_{c1})(1-g(X_i^\top\hat{\beta}_{c1})) = \sum_{i \in \mathcal{S}_2} (\tilde{\gamma_i}^*)^2 h(X_i^\top\hat{\beta}_{c1}) \lesssim O_p\left(\frac{1}{n_c}\right), \]
            because $\tilde{\gamma}_i^* = \frac{1}{n_c}\xi^\top \Gamma^{-1} X_i$ and the sub-Gaussianity of $X_i$. Because 
            \[\frac{[g'(X_i^\top\hat{\beta}_{c1})]^2}{g(X_i^\top\hat{\beta}_{c1})(1-g(X_i^\top\hat{\beta}_{c1}))} \leq \frac{[g'(|X_i^\top\hat{\beta}_{c1}|)]^2}{1-g(|X_i^\top\hat{\beta}_{c1}|)} \leq C,\]
            for some constant C by \textit{Assumption 2.3}, we have
            \[\hat{v} = O_p\left(\frac{1}{n_c}\right).\]
            Next, we want to find the asymptotic rate of the difference between $v$ and $\hat{v}$,
            \begin{equation}
                \begin{aligned}
                    |\hat{v} - v| &= \left|\sum_{i \in \mathcal{S}_2} \gamma_i^2\left[g(X_i^\top\hat{\beta}_{c1})(1-g(X_i^\top\hat{\beta}_{c1})) - g(X_i^\top \beta_c)(1-g(X_i^\top \beta_c))\right]\right| \\
                    & \leq \hat{v} \cdot \max_{i \in \mathcal{S}_2} \left|1-\frac{g(X_i^\top\beta_c)(1-g(X_i^\top\beta_c))}{g(X_i^\top\hat{\beta}_{c1})(1-g(X_i^\top\hat{\beta}_{c1}))}\right| \\
                    & \lesssim o_p\left(\frac{1}{n_c}\right).
                   \label{eq:vhatdiffv}
                \end{aligned}
            \end{equation}
            Thus we have
            \[v = O_p\left(\frac{1}{n_c}\right).\]

\medskip

\textbf{Proof of Lemma A.4: }
Our main quantity of interest $\frac{1}{n_t}\sum_{\{i:D_i=1\}} g'(X_i^\top\hat \beta_{c1})X_i^\top -  \sum_{\{i \in \mathcal{S}_2\}} \gamma_i g'(X_i^\top\hat \beta_{c1})X_i^\top$ exhibits translation invariance. Specifically, the transformation $g'(X_i^\top\hat \beta_{c1})X_i \to g'(X_i^\top\hat \beta_{c1})X_i+c$ for any constant $c \in \mathbb{R}$ retains the quantity's value. 
So without loss of generality, re-centre the problem such that the mean $\E(g'(X_i^\top\hat{\beta}_{c1})X_{ij}|D = 1) = 0$. Then for each $j = 1,...,p$, there exists constants $c_1, c_2, c_3, c_4>0$ such that 
\setcounter{beana}{0}
\begin{list}
{(\alph{beana})}{\usecounter{beana}}
        \item $\frac{1}{n_t}\sum_{\{i:D_i=1\}} g'(X_i^\top\hat \beta_{c1})X_{ij}$ is sub-Gaussian with parameter $c_1/n_t$ by \textit{Assumption 3.3};
        \item $A_j:= \frac{1}{n_c}\sum_{\{i \in \mathcal{S}_2\}} g'(X_i^\top\hat \beta_{c1})X_{ij}\frac{p(X_i)}{1-p(X_i)}$ is sub-Gaussian with parameter $c_2/n_c$ by the boundedness of $g'(x)$ for all $x$ and $\zeta \leq p(X_i) \leq 1-\zeta $ for all $i$. Note that $\E(A_j) = \E(g'(X_i^\top\hat{\beta}_{c1})X_{ij}|D = 1) = 0$;
        \item $b_1:=\frac{1}{n_c}\sum_{\{i \in \mathcal{S}_2\}} \frac{p(X_i)}{1-p(X_i)}$ is sub-Gaussian with parameter $c_3/n_c$ after re-centring;
        \item $b_2:=\frac{1}{n_c}\sum_{\{i \in \mathcal{S}_2\}} \left(\frac{p(X_i)}{1-p(X_i)}\right)^2$ is sub-Gaussian with parameter $c_4/n_c$ after re-centring.
    \end{list}
    With a union bound, we can get that 
    \begin{equation*}
        \begin{aligned}
            ||A||_\infty &\lesssim O_p\left(\sqrt{\frac{\log p}{n_c}}\right), \\
            \left\Vert \frac{1}{n_t}\sum_{\{i:D_i=1\}} g'(X_i^\top\hat \beta_{c1})X_i - A \right\Vert_\infty & \lesssim O_p\left(\sqrt{\frac{\log p}{n_c}}\right), \\
            |b_1-\rho| &\lesssim O_p\left(\frac{1}{\sqrt{n_c}}\right), \\
            |b_2-\E[b_2]| & \lesssim O_p\left(\frac{1}{\sqrt{n_c}}\right).
        \end{aligned}
    \end{equation*}
    So we can further get that 
    \begin{equation*}
        \begin{aligned}
            &\left\Vert \frac{1}{n_t}\sum_{\{i:D_i=1\}} g'(X_i^\top\hat \beta_{c1})X_i^\top -  \sum_{\{i \in \mathcal{S}_2\}} \gamma_i g'(X_i^\top\hat \beta_{c1})X_i^\top \right\Vert_\infty  = \left\Vert \frac{1}{n_t}\sum_{\{i:D_i=1\}} g'(X_i^\top\hat \beta_{c1})X_i -  \frac{1}{b_1}A \right\Vert_\infty \\
            &\leq \left\Vert \frac{1}{n_t}\sum_{\{i:D_i=1\}} g'(X_i^\top\hat \beta_{c1})X_i - A \right\Vert_\infty + \left|1-\frac{1}{b_1}\right|||A||_\infty \lesssim O_p\left(\sqrt{\frac{\log p}{n_c}}\right).
        \end{aligned}
    \end{equation*}
    In addition, for the weights $\gamma^{**}$, we have 
    \[n_c||\boldsymbol{\gamma}_i^{**}||_2^2 = \frac{b_2}{b_1^2} \leq \frac{\E[b_2]}{\rho^2} + o_p(1),\]
since $\E[b_2] = \E\left[\frac{p(X_i)^2}{(1-p(X_i))^2}|D_i = 0\right] \in (0, (1-\epsilon^2)/\epsilon^2)$.

\medskip

\textbf{Proof of Lemma A.5: }
Because there exits an index $j \in \{1,...,p\}$ such that $|\E[g'(X^\top \hat{\beta}_{c1})X|D=1]|_j \geq \kappa$, any feasible solution to the optimization problem (6) must eventually satisfy $\left[\sum_{i \in \mathcal{S}_2} \gamma_i g'(X_i^\top\hat{\beta}_{c1})X_i\right]^2_j \geq \kappa^2/2$.
    By Cauchy-Schwarz inequality, we have
            \[\sum_{i \in \mathcal{S}_2} \left[\gamma_i g'(X_i^\top\hat{\beta}_{c1})\right]^2 \geq \kappa^2/\left(2\sum_{i \in \mathcal{S}_2} X_{ij}^2\right) = O_p \left(\frac{1}{n_c}\right)\] 
            Thus we have
            \[O_p \left(\frac{1}{n_c}\right) \leq \sum_{i \in \mathcal{S}_2} \left[\gamma_i g'(X_i^\top\hat{\beta}_{c1})\right]^2 = \sum_{i \in \mathcal{S}_2} \gamma_i^2 g(X_i^\top\hat{\beta}_{c1})(1-g(X_i^\top\hat{\beta}_{c1}))\frac{[g'(X_i^\top\hat{\beta}_{c1})]^2}{g(X_i^\top\hat{\beta}_{c1})(1-g(X_i^\top\hat{\beta}_{c1}))}, \]
            Let that \[\hat{v} := \sum_{i \in \mathcal{S}_2} \gamma_i^2 g(X_i^\top\hat{\beta}_{c1})(1-g(X_i^\top\hat{\beta}_{c1})),\] and it is upper bounded by
            \[\hat{v} \leq \hat{v}^{**}:= \sum_{i\in \mathcal{S}_2} (\gamma_i^{**})^2 g(X_i^\top \hat{\beta}_{c1})(1-g(X_i^\top \hat{\beta}_{c1})) \leq \left[\max_{i \in \mathcal{S}_2} g(X_i^\top \hat{\beta}_{c1})(1-g(X_i^\top \hat{\beta}_{c1}))\right] \cdot ||\boldsymbol{\gamma}^{**}||_2^2 \lesssim O_p(1/n_c), \]
            by Lemma A.4 and the sub-Gaussianity of $X_i$. Because 
            \[\frac{[g'(X_i^\top\hat{\beta}_{c1})]^2}{g(X_i^\top\hat{\beta}_{c1})(1-g(X_i^\top\hat{\beta}_{c1}))} \leq \frac{[g'(|X_i^\top\hat{\beta}_{c1}|)]^2}{1-g(|X_i^\top\hat{\beta}_{c1}|)} \leq C,\]
            for some constant C by \textit{Assumption 2.3}, we have
            \[\hat{v} = O_p\left(\frac{1}{n_c}\right).\]
            Similar to the proof in Section A.3, we can find asymptotic rate of the difference between $v$ and $\hat{v}$,
            \begin{equation}
                \begin{aligned}
                    |\hat{v} - v| &= \left|\sum_{i \in \mathcal{S}_2} \gamma_i^2\left[g(X_i^\top\hat{\beta}_{c1})(1-g(X_i^\top\hat{\beta}_{c1})) - g(X_i^\top \beta_c)(1-g(X_i^\top \beta_c))\right]\right| \\
                    & \leq \hat{v} \cdot \max_{i \in \mathcal{S}_2} \left|1-\frac{g(X_i^\top\beta_c)(1-g(X_i^\top\beta_c))}{g(X_i^\top\hat{\beta}_{c1})(1-g(X_i^\top\hat{\beta}_{c1}))}\right| \\
                    & \lesssim o_p\left(\frac{1}{n_c}\right).
                   \label{eq:vhatdiffv_2}
                \end{aligned}
            \end{equation}
            Thus we have
            \[v = O_p\left(\frac{1}{n_c}\right).\]

\section{Properties of Examples}
In this section, we validate \textit{Assumptions 2.3(a)-(c)} for the link functions in three examples. For the proof of \textit{Assumption 2.3(d)}, refer to the supplementary materials of \citet{cai2023statistical}.

\subsection{Logistic Link Function}
The logistic link function is defined as
\[g(x) = \frac{\exp(x)}{1+\exp(x)},\]
and its derivatives are given by 
\[g'(x) = \frac{\exp(x)}{(1+\exp(x))^2}, \quad g''(x) = \frac{\exp(x)(1-\exp(x))}{(1+\exp(x))^3}.\]
\textit{Assumption 2.3(a)} holds since $0 < g'(x) \leq \frac{1}{4}$ for $x \in \mathbb{R}$ and $g''(x) <0 $ for $x>0$. Next, we want to show that $g(x) \leq \Phi(x)$ for $x \geq 0$.
To achieve this, we define $h_1(x) := g(x) - \Phi(x)$, from which it follows that
\[h_1'(x) = \frac{\sqrt{2\pi}\exp\left(x+\frac{x^2}{2}\right) - (1+\exp(x))^2}{\sqrt{2\pi}(1+\exp(x))^2\exp\left(\frac{x^2}{2}\right)} < 0 \text{ for } x\in [0,1].\]
This inequality is justified by the fact that $x \in [0,1]$
\begin{equation*}
    \begin{aligned}
        \sqrt{2\pi}\exp\left(x+\frac{x^2}{2}\right) - (1+\exp(x))^2 & = \left(\sqrt{2\pi} - \exp(x/2)-2\exp(-x/2)-\exp(-3x/2)\right)\exp(3x/2) \\
        & \leq \left(\sqrt{2\pi}-2-\exp(-x/2)-\exp(-3x/2)\right) \\
        & \leq \left(\sqrt{2\pi}-2-\exp(-1/2)-\exp(-3/2)\right) < 0.
    \end{aligned}
\end{equation*}
Given $h_1(0) <0 $, it follows that $g(x) \leq \Phi(x)$ for $x \in [0,1]$.
For $x \in [1, \infty)$,  we aim to show $1-g(x) \geq \phi(x)/x \geq 1-\Phi(x)$ using the standard Gaussian distribution's tail bound.
To this end, we define $h_2(x) := \sqrt{2\pi}x\exp\left(\frac{x^2}{2}\right)-(1+\exp(x))$ and show $h_2(x) \geq 0$ for $x\in [1, \infty)$. This holds since $h_2(1) >0 $ and 
\[h_2'(x) = \sqrt{2\pi}(1+x^2)\exp\left(\frac{x^2}{2}\right) - \exp(x) \geq h_2'(1) >0 \text{ for } x \geq 1.\]
Together with
\[\frac{g'(x)}{1-g(x)} = g(x) \leq 1, \quad \lim_{x\to \infty} x^2g'(x) = \frac{x^2 e^x}{(1+\exp(x))^2} = 0, \]
\textit{Assumption 2.3(b)} is thereby confirmed. For \textit{Assumption 2.3(c)}, it is sufficient to show that 
\[\lim_{x\to \infty} |g''(x)| = 0.\]

\subsection{Probit link Function}
The probit link function is defined as 
\[g(x) = \Phi(x) := \frac{1}{\sqrt{2\pi}} \int_{-\infty}^{x} \exp\left\{-\frac{t^2}{2}\right\} dt,\]
and its derivatives are given by 
\[g'(x) = \frac{1}{\sqrt{2\pi}} \exp\left\{-\frac{x^2}{2}\right\}, \quad g''(x) = -\frac{x}{\sqrt{2\pi}} \exp\left\{-\frac{x^2}{2}\right\}.\]
\textit{Assumption 2.3(a)} holds since $0 < g'(x) \leq \frac{1}{\sqrt{2\pi}}$ for $x \in \mathbb{R}$ and $g''(x) <0 $ for $x>0$. It is trivial to show that $g(x) \leq \Phi(x)$ for all $x >0$.
We can show that
\[ \lim_{x \to \infty} \frac{g'(x)}{1-g(x)} = \lim_{x \to \infty} -\frac{1}{x} = 0, \quad  \lim_{x\to \infty} x^2g'(x) = \lim_{x\to \infty} \frac{x^2}{\sqrt{2\pi} \exp\left\{\frac{x^2}{2}\right\}} = 0,\]
and
\[\lim_{x \to \infty} |g''(x)| = \lim_{x \to -\infty} |g''(x)| = \lim_{x \to \infty} \frac{x}{\sqrt{2\pi}\exp\left\{\frac{x^2}{2}\right\}} = 0,\]
so \textit{Assumption 2.3(b)} and \textit{2.3(c)} are satisfied.

\subsection{Student's distribution}
The link function is defined by the student's $t_v$-distribution:
\[g(x) = \int_{-\infty}^{x} g'(t) dt,\]
and its derivatives are given by
\[g'(x) = \frac{\Gamma_d(\frac{\nu+1}{2})}{\sqrt{\nu \pi}\Gamma_d(\frac{\nu}{2})}\left(1+\frac{x^2}{\nu}\right)^{-(\nu+1)/2}, \quad 
g''(x) = \frac{\Gamma_d(\frac{\nu+1}{2})}{\sqrt{\nu \pi}\Gamma_d(\frac{\nu}{2})} \cdot \left(-\frac{\nu+1}{2}\right)\left(1+\frac{x^2}{\nu}\right)^{-(\nu+3)/2}.\]
where $\nu \in \mathbb{N}$ is the degree of freedom and $\Gamma_d$ is the gamma function. \textit{Assumption 2.3(a)} holds since $0 < g'(x) \leq \frac{1}{\sqrt{2\pi}}$ for $x \in \mathbb{R}$ and $g''(x) <0 $ for $x>0$. The fact that $g(x) \leq \Phi(x)$ for all $x >0$ holds by the definition of student's distribution.
We can show that
\[\lim_{x \to \infty} \frac{g'(x)}{1-g(x)} = \lim_{x \to \infty} \frac{\nu+1}{2}\left(1+\frac{x^2}{\nu}\right)^{-1} = 0,\]
\[\lim_{x\to \infty} x^2g'(x) = \frac{\sqrt{\nu \pi}\Gamma_d(\frac{\nu}{2})}{\Gamma_d(\frac{\nu+1}{2})} \cdot \lim_{x \to \infty} \frac{x^2}{\left(1+\frac{x^2}{\nu}\right)^{(\nu+1)/2}} = 0, \]
if $\nu > 1$ and $\lim_{x\to \infty} x^2g'(x)$ is bounded by a constant if $\nu = 1$, so \textit{Assumption 2.3(b)} is satisfied. \textit{Assumption 2.3(c)} holds by the fact that $\lim_{x \to \infty} |g''(x)| = 0$.

\section{ADDITIONAL SIMULATION RESULTS}
This appendix reports designs, implementation details, and full results that complement the one-page simulation report in the main text.

\subsection{Simulation Results of Section 4 under Additional Design Settings}\label{sec:mdesigns}
We consider a two-stage model, described as $Y_i \sim \text{Bernoulli}\left(g(X_i^\top \beta_Y + D_i)\right)$,
$D_i \sim \text{Bernoulli}\left(g(X_i^\top \beta_D)\right)$ where $g(x) = \exp(x)/(1+\exp(x))$. Covariates $X_i$ are generated from a multivariate Gaussian distribution with a covariance matrix $\Sigma$ where $\Sigma_{ij} = \rho^{|i-j|}$. 
Three designs of coefficients are examined.\footnote{Design (a) coincides with the main-text setting; we restate it here for completeness and notational consistency.}
\setcounter{bean}{0}
\begin{list}{Design (\alph{bean})}{\usecounter{bean}}

  \item Sparse outcome; sparse \emph{vs.} dense propensity.\smallskip

        \begin{list}{(\roman{beana})}{\usecounter{beana}} 
          \item \emph{Outcome Coefficients: }  
                Let $(\beta_Y)_j\propto 1/j^{2}$ (sparse);

        \item \emph{Propensity Coefficients: } 
                We study two cases
                (1) sparse $(\beta_D)_j\propto 1/j^{2}$ and  
                (2) dense $(\beta_D)_j\propto 1/\sqrt{j}$,  
                each rescaled to $\|\beta_D\|_2\in\{1,4\}$.
        \end{list}

  \item Extreme-probability outcomes.\smallskip

        \begin{list}{(\roman{beana})}{\usecounter{beana}} 
          \item \emph{Outcome Coefficients: } Varying
        $\|\beta_Y\|_2$ across $\{10,20,\dots,80\}$ so that
        $g(X_i^{\top}\beta_Y)$ is concentrated near $0$ or $1$.
        \item \emph{Propensity Coefficients: } 
                Keep $(\beta_D)_j\propto 1/j^{2}$ with $\|\beta_D\|_2=1$.
        \end{list}

  \item Cross-sparsity mix.\smallskip

        \begin{list}{(\roman{beana})}{\usecounter{beana}}
          \item \emph{Outcome coefficients: }$\beta_Y$ is chosen from among four alternatives: (1) \textit{Dense}, defined as $(\beta_Y)j \propto 1/\sqrt{j}$; (2) \textit{Harmonic}, expressed as $(\beta_Y)j \propto 1/(j+9)$; (3) \textit{Moderately Sparse} and (4) \textit{Sparse} formulated as
\[\beta_Y \propto (\underbrace{10,...,10}_{10}, \underbrace{1,...,1}_{90}, \underbrace{0,...,0}_{p-100}), \quad \beta_Y \propto (\underbrace{1,...,1}_{10}, \underbrace{0,...,0}_{p-10}),\]
respectively.

          \item \emph{Propensity coefficients: }  
                $\beta_D$ is chosen from either of the two following possibilities: (1) \textit{Dense}, where $(\beta_D)j \propto 1/\sqrt{j}$; (2) \textit{Sparse}, represented by 
\[\beta_D \propto (\underbrace{1,...,1}_{10}, \underbrace{0,...,0}_{p-10}).\] 
        \end{list}

\end{list}
This arrangement expands the diversity of settings and allows for a comprehensive evaluation of our proposed method under various conditions.

We evaluate our estimator in three implementations: DB1 (2-fold sample splitting), DB2 (2-fold cross-fitting), and DB6 (no splitting).\footnote{We set the tuning parameter to $\zeta=0.5$ in the optimization program, following \citet{athey2018approximate}. At $\zeta=0.5$ the objective targets worst-case MSE (a standard minimax criterion; see \citet{armstrong2018optimal}, \citet{armstrong2021finite}). In practice, results are stable across a wide range of $\zeta$ (see Section \ref{sec:diffzeta}); developing a data-driven choice is left for future work.}
We benchmark against established methods.\footnote{We do not include double selection (e.g., \citet{belloni2016post}) because (i) DML by \citet{chernozhukov2018double} can be viewed as an extension of that procedure by 
    estimating both the outcome model and the propensity score model with flexible ML and is included here, and (ii) for linear outcomes, \citet{athey2018approximate} already compare to double selection.} These are: (i) Naïve difference in means $\bar Y_t-\bar Y_c$; (ii) Regression with a simple average of residuals;\footnote{We omit the pure regression plug-in, whose bias is documented in the Section 1, and instead include a plug-in plus simple residual averaging to reduce bias.} (iii) IPW with propensities from Lasso-logit, trimmed to $(0.05,0.95)$; (iv) DML with Lasso-logit outcome and propensity models, reported with two-fold sample splitting (DML1) and two-fold cross-fitting (DML2);\footnote{All outcome and propensity score models are estimated via Lasso logistic models, and for regularization parameter selection, we employed cross-validation with the \textit{lambda.min} rule using the R package \texttt{glmnet} (\citet{friedman2010regularization}).} (v) AML with two-fold cross-fitting;\footnote{Implemented from the code accompanying \citet{chernozhukov2022automatic} on the \textit{Econometrica} website.} and (vi) ARB by \citep{athey2018approximate}, which assumes a linear outcome model and uses no splitting. Although the GLM outcome is misspecified for ARB, the linear approximation may still perform well when signals are weak because our target averages predictions.

To assess the performance of the diverse estimators, we employ the (relative) MSE as the evaluative metric, defined as:
\begin{equation*}
    \frac{1}{n_{\text{sim}}}\sum_{i = 1}^{n_{\text{sim}}} \left(\frac{\hat{\tau}_i - \tau}{\tau}\right)^2,
\end{equation*}
where $n_{\text{sim}}$ denotes the number of replications.
\begin{table}
  \caption{Mean‐squared error of different estimators under Design (a) setting}
  \begin{center}
    \begin{tabular*}{\textwidth}{@{}lcccccccc@{}}
      \hline\hline
      \multirow{3}{*}{Method}
        & \multicolumn{4}{c}{Sparse $\beta_D$}
        & \multicolumn{4}{c}{Dense $\beta_D$} \\
      \cline{2-5} \cline{6-9}
        & \multicolumn{2}{c}{$\|\beta_D\|_2=1$}
        & \multicolumn{2}{c}{$4$}
        & \multicolumn{2}{c}{$\|\beta_D\|_2 = 1$}
        & \multicolumn{2}{c}{$4$} \\
      \cline{2-3}\cline{4-5}\cline{6-7}\cline{8-9}
        & $\|\beta_Y\|_2=1$ & $4$
        & $1$ & $4$
        & $\|\beta_Y\|_2=1$ & $4$
        & $1$ & $4$ \\
      \hline
      Naïve      & 1.176 & 4.565 & 2.214 & 8.785 & 0.430 & 1.539 & 0.646 & 2.338 \\
      Regression & 0.162 & 0.224 & 0.262 & 0.411 & 0.112 & 0.165 & 0.144 & 0.210 \\
      IPW        & 0.196 & 0.630 & 0.373 & 1.222 & 0.162 & 0.487 & 0.227 & 0.711 \\
      DML1       & 0.164 & 0.255 & 0.330 & 0.431 & 0.236 & 0.305 & 0.494 & 0.698 \\
      DML2       & 0.113 & 0.153 & 0.290 & 0.339 & 0.123 & 0.201 & 0.631 & 0.369 \\
      AML        & 0.115 & 0.169 & 0.170 & 0.361 & 0.098 & 0.156 & 0.112 & 0.168 \\
      ARB        & 0.094 & 0.141 & 0.210 & 0.315 & 0.095 & 0.158 & 0.113 & 0.191 \\
      DB1        & 0.126 & 0.162 & 0.202 & 0.291 & 0.125 & 0.186 & 0.146 & 0.204 \\
      DB2        & 0.087 & 0.123 & 0.144 & 0.221 & 0.098 & 0.146 & 0.110 & 0.164 \\
      DB6        & 0.067 & 0.089 & 0.116 & 0.161 & 0.070 & 0.099 & 0.078 & 0.108 \\
      \hline\hline
    \end{tabular*}
  \end{center}
  \footnotesize
  \renewcommand{\baselineskip}{11pt}
  \textbf{Note:}  In this study, we generate data under the prescribed settings: $n = 500$, $p = 800$, $\rho = 0.5$, and $n_{\text{sim}} = 1,000$. Denote
\textit{DML1} as the Double Machine Learning estimator employing two-fold sample-splitting, and \textit{DML2} as the same estimator
utilizing two-fold cross-fitting. Our proposed debiasing technique utilizing sample-splitting is termed \textit{DB1}. When augmented
with cross-fitting, it is labeled \textit{DB2}, and without any splitting techniques, it is referred to as \textit{DB6}.
\label{tab:RMSE_1}
\end{table}

\begin{table}
    \caption{Mean-squared error of different estimators under Design (b) setting}
    \label{tab:RMSE_2}
    \begin{center}
    \begin{tabular*}{\textwidth}{@{}lcccccccc@{}}
        \hline \hline
        \multirow{2}{*}{Method} & \multicolumn{8}{c}{$||\beta_Y||_2$} \\
        \cline{2-9}
        & $\quad 10 \quad $ & 
        $\quad 20 \quad $ & 
        $\quad 30 \quad $ & 
        $\quad 40 \quad $ & 
        $\quad 50 \quad $ & 
        $\quad 60 \quad $ &
        $\quad 70 \quad $ & 
        $\quad 80 \quad $ \\
        \hline
        Na\"{i}ve & 2.729 & 3.000 & 3.088 & 3.135 & 3.153 & 3.183 & 3.228 & 3.210 \\ 
        Regression & 0.286 & 0.297 & 0.307 & 0.295 & 0.300 & 0.300 & 0.306 & 0.301 \\ 
        IPW & 0.607 & 0.819 & 1.014 & 1.055 & 1.148 & 1.173 & 1.207 & 1.226 \\ 
        DML1 & 2.151 & 15.848 & 15.221 & 3.685 & 12.616 & 27.307 & 1.564 & 2.701 \\ 
        DML2 & 11.942 & 5.872 & 15.379 & 2.652 & 3.156 & 1.312 & 0.769 & 0.490 \\ 
        AML & 0.228 & 0.279 & 0.309 & 0.305 & 0.313 & 0.333 & 0.338 & 0.345 \\ 
        ARB & 0.354 & 0.524 & 0.716 & 0.752 & 0.847 & 0.884 & 0.937 & 0.987 \\ 
        DB1 & 0.275 & 0.336 & 0.371 & 0.362 & 0.412 & 0.401 & 0.422 & 0.417 \\ 
        DB2 & 0.196 & 0.238 & 0.273 & 0.269 & 0.300 & 0.300 & 0.298 & 0.314 \\ 
        DB3 & 0.156 & 0.191 & 0.227 & 0.221 & 0.249 & 0.250 & 0.260 & 0.258 \\ 
        \hline
        \hline
        \end{tabular*}
    \end{center}
  \footnotesize
  \renewcommand{\baselineskip}{11pt}
        \textbf{Note:} In this study, we generate data under the prescribed settings: $n = 500$, $p = 800$, $\rho = 0.5$, and $n_{\text{sim}} = 1,000$. 
         Signal strengths are standardized such that $||\beta_D||_2 = 1$ and we vary the magnitude of $||\beta_Y||_2$ across a range from 10 to 80.
         For additional clarifications, refer to the notes in Table \ref{tab:RMSE_1}.
\end{table}

\begin{table}
    \caption{Mean-squared error of different estimators under Design (c) setting}
    \label{tab:RMSE_3}
    \begin{center}
    \begin{tabular*}{\textwidth}{@{}lcccccccc@{}}
        \hline \hline
        \multirow{2}{*}{Method} & \multicolumn{2}{c}{Sp $\beta_Y$} & \multicolumn{2}{c}{Mod Sp $\beta_Y$} & \multicolumn{2}{c}{Har $\beta_Y$} & \multicolumn{2}{c}{Den $\beta_Y$} \\
        \cline{2-3} \cline{4-5} \cline{6-7} \cline{8-9}
        & Sp $\beta_D$ & Den $\beta_D$ & Sp $\beta_D$ & Den $\beta_D$ & Sp $\beta_D$ & Den $\beta_D$ & Sp $\beta_D$ & Den $\beta_D$ \\
        \hline
        Na\"{i}ve & 3.118 & 1.133 & 2.906 & 1.747 & 1.749 & 3.056 & 1.226 & 3.465 \\ 
        Regression & 0.515 & 0.224 & 0.507 & 0.476 & 0.493 & 1.310 & 0.474 & 1.915 \\ 
        IPW & 0.688 & 0.410 & 0.653 & 0.726 & 0.460 & 1.633 & 0.345 & 2.102 \\ 
        DML1 & 0.501 & 0.374 & 0.517 & 0.694 & 0.492 & 1.861 & 0.473 & 2.385 \\ 
        DML2 & 0.376 & 0.293 & 0.421 & 0.562 & 0.422 & 1.642 & 0.345 & 2.221 \\ 
        AML & 0.436 & 0.253 & 0.450 & 0.575 & 0.414 & 1.738 & 0.344 & 2.360 \\ 
        ARB & 0.294 & 0.179 & 0.311 & 0.344 & 0.276 & 1.071 & 0.235 & 1.436 \\ 
        DB1 & 0.383 & 0.264 & 0.409 & 0.526 & 0.400 & 1.536 & 0.345 & 2.092 \\ 
        DB2 & 0.334 & 0.221 & 0.343 & 0.476 & 0.335 & 1.455 & 0.288 & 2.022 \\ 
        DB3 & 0.177 & 0.123 & 0.180 & 0.271 & 0.184 & 0.863 & 0.167 & 1.315 \\   
        \hline
        \hline
    \end{tabular*}
    \end{center}
  \footnotesize
  \renewcommand{\baselineskip}{11pt}
  \textbf{Note:} In this study, we generate data under the prescribed settings: $n = 500$, $p = 800$, $\rho = 0.5$, and $n_{\text{sim}} = 1,000$. 
        Signal strengths are standardized such that $||\beta_D||_2 = 1$ and $||\beta_Y||_2 = 1$ for all models. 
        For additional clarifications, refer to the notes in Table \ref{tab:RMSE_1}.
\end{table}

In Design (a), the coefficients for the outcome model are sparse, whereas the propensity score model's coefficients are considered under both sparse and dense scenarios. 
Table \ref{tab:RMSE_1} reveals that methods relying on the inverse of the estimated propensity score, such as IPW, DML1, and DML2, do not outperform the regression-based imputation estimator with simple weighting,
under scenarios with dense $\beta_D$ or strong strength signals even for sparse $\beta_D$.
This finding aligns with the challenges associated with using inverse propensity scores, as discussed in Section 2. 
Specifically, when $\beta_D$ is dense, it becomes challenging to obtain accurate estimates of the propensity score function.
Our newly introduced debiasing estimator with sample-splitting, labeled as DB1, outperforms other methods, including Na\"{i}ve, Regression, IPW, DML1, DML2, and AML, in most cases, demonstrating stability. 
Moreover, when our proposed estimator with cross-fitting and without any splitting techniques, referred to as DB2 and DB6, it yields superior results, besting all other estimators in every case.
Additionally, the new version of double machine learning estimator that does not use the inverse propensity score weighting, namely AML, also outperforms DML1 and DML2, under dense $\beta_D$.
This reinforces our argument about the instability associated with the use of inverse estimated propensity scores.
Note that the linear misspecified version of our proposed estimator, denoted as ARB, performs well, particularly under weak signal strengths. 
Since ARB utilizes the entire sample without splitting, it is more appropriate to compare it with DB2 and DB6. Both of our proposed estimators outperform ARB across all scenarios.

In Design (b), while holding other parameters constant, we only adjust the signal strength of \(\beta_Y\) to ensure that the expected values of outcomes do not concentrate around the middle of the interval \([0,1]\). 
As the signal strength increases, shown in Table \ref{tab:RMSE_2}, the performance of the ARB estimator deteriorates because the linear approximation becomes less appropriate. 
Estimators that are based on the correct specification of the outcome regression, such as Regression, AML, and DB, perform better. 
However, it is important to note that the double machine learning estimators exhibit exceptionally poor performance in this setting. 
This could be due to both estimators not being stable, leading to extreme estimates that greatly increase the MSE. We report the median squared error of our simulations in Section \ref{sec:medianSE}.

In Design (c), various combinations of $\beta_D$ and $\beta_Y$ are examined with a fixed signal strength. 
The results in Table \ref{tab:RMSE_3} indicate that when $\beta_D$ is dense, DML2 fails to outperform the regression estimator, whereas DML2 excels when $\beta_D$ is sparse. 
This pattern emerges because a simpler propensity score model yields more accurate finite sample estimates of both the propensity score and its inverse. 
Conversely, complications arise as the propensity score model becomes more intricate. In this case, estimators that bypass estimating the propensity score, such as AML, DB1-DB3, shows better performance. Moreover, the estimator, DB3, consistently outperform other methods in all scenarios.

\subsection{Simulation Results of Section 4 in Median Squared Error Form}\label{sec:medianSE}
In this section, we present simulation results for the three settings described in (Section 4 and) Section \ref{sec:mdesigns} of the paper, focusing on the median squared error. Because the DML estimator exhibits instability, especially in the Design (b) setting.

\begin{table}
    \caption{Median squared error of different estimates under Design (a) setting}
    \begin{center}
    \begin{tabular*}{\textwidth}{@{}lcccccccc@{}}
      \hline\hline
      \multirow{3}{*}{Method}
        & \multicolumn{4}{c}{Sparse $\beta_D$}
        & \multicolumn{4}{c}{Dense $\beta_D$} \\
      \cline{2-5} \cline{6-9}
        & \multicolumn{2}{c}{$\|\beta_D\|_2=1$}
        & \multicolumn{2}{c}{$4$}
        & \multicolumn{2}{c}{$\|\beta_D\|_2 = 1$}
        & \multicolumn{2}{c}{$4$} \\
      \cline{2-3}\cline{4-5}\cline{6-7}\cline{8-9}
        & $\|\beta_Y\|_2=1$ & $4$
        & $1$ & $4$
        & $\|\beta_Y\|_2=1$ & $4$
        & $1$ & $4$ \\
        \hline
        Na\"{i}ve & 1.146 & 4.515 & 2.160 & 8.735 & 0.367 & 1.461 & 0.593 & 2.221 \\ 
        Regression & 0.124 & 0.163 & 0.215 & 0.338 & 0.064 & 0.106 & 0.104 & 0.149 \\ 
        IPW & 0.143 & 0.567 & 0.278 & 1.025 & 0.108 & 0.408 & 0.169 & 0.602 \\ 
        DML1 & 0.081 & 0.125 & 0.123 & 0.183 & 0.088 & 0.156 & 0.131 & 0.181 \\ 
        DML2 & 0.058 & 0.088 & 0.095 & 0.128 & 0.062 & 0.102 & 0.073 & 0.111 \\ 
        AML & 0.080 & 0.108 & 0.126 & 0.276 & 0.057 & 0.100 & 0.068 & 0.099 \\ 
        ARB & 0.041 & 0.066 & 0.105 & 0.144 & 0.041 & 0.074 & 0.053 & 0.100 \\ 
        DB1 & 0.062 & 0.074 & 0.095 & 0.152 & 0.057 & 0.093 & 0.075 & 0.108 \\ 
        DB2 & 0.043 & 0.065 & 0.078 & 0.125 & 0.047 & 0.083 & 0.063 & 0.100 \\ 
        DB6 & 0.029 & 0.046 & 0.055 & 0.080 & 0.032 & 0.053 & 0.036 & 0.060 \\ 
        \hline
        \hline
    \end{tabular*}
    \end{center}
    \footnotesize
  \renewcommand{\baselineskip}{11pt}
  \textbf{Note:} In this study, we generate data under the prescribed settings: $n = 500$, $p = 800$, $\rho = 0.5$, and $n_{\text{sim}} = 1,000$. 
        Denote \textit{DML1} as the Double Machine Learning estimator employing two-fold sample-splitting, and \textit{DML2} as the same estimator utilizing two-fold cross-fitting. 
        Our proposed debiasing technique utilizing sample-splitting is termed \(\textit{DB1}\). When augmented with cross-fitting, it is labelled \(\textit{DB2}\), and without any splitting techniques, it is referred to as \(\textit{DB6}\).
        \label{tab:median_RMSE_1}
\end{table}

\begin{table}
    \caption{Median squared error of different estimates under Design (b) setting}
    \label{tab:median_RMSE_2}
    \begin{center}
    \begin{tabular*}{\textwidth}{@{}lcccccccc@{}}
        \hline \hline
        \multirow{2}{*}{Method} & \multicolumn{8}{c}{$||\beta_Y||_2$} \\
        \cline{2-9}
        & $\quad 10 \quad $ & 
        $\quad 20 \quad $ & 
        $\quad 30 \quad $ & 
        $\quad 40 \quad $ & 
        $\quad 50 \quad $ & 
        $\quad 60 \quad $ &
        $\quad 70 \quad $ & 
        $\quad 80 \quad $ \\
        \hline
        Na\"{i}ve & 2.678 & 2.969 & 3.049 & 3.099 & 3.117 & 3.162 & 3.168 & 3.152 \\ 
        Regression & 0.242 & 0.250 & 0.269 & 0.242 & 0.251 & 0.255 & 0.261 & 0.270 \\ 
        IPW & 0.476 & 0.687 & 0.868 & 0.925 & 1.012 & 1.050 & 1.063 & 1.090 \\ 
        DML1 & 0.157 & 0.156 & 0.178 & 0.174 & 0.184 & 0.174 & 0.180 & 0.173 \\ 
        DML2 & 0.124 & 0.128 & 0.165 & 0.151 & 0.167 & 0.162 & 0.155 & 0.163 \\ 
        AML & 0.176 & 0.229 & 0.257 & 0.266 & 0.271 & 0.285 & 0.293 & 0.294 \\ 
        ARB & 0.195 & 0.310 & 0.397 & 0.458 & 0.521 & 0.597 & 0.576 & 0.676 \\ 
        DB1 & 0.123 & 0.151 & 0.196 & 0.175 & 0.208 & 0.196 & 0.236 & 0.206 \\ 
        DB2 & 0.098 & 0.125 & 0.158 & 0.163 & 0.196 & 0.190 & 0.187 & 0.195 \\ 
        DB6 & 0.078 & 0.090 & 0.105 & 0.103 & 0.122 & 0.131 & 0.144 & 0.140 \\ 
        \hline
        \hline
    \end{tabular*}
    \end{center}
    \footnotesize
  \renewcommand{\baselineskip}{11pt}
        \textbf{Note:} In this study, we generate data under the prescribed settings: $n = 500$, $p = 800$, $\rho = 0.5$, and $n_{\text{sim}} = 1,000$. 
         Signal strengths are standardized such that $||\beta_D||_2 = 1$ and we vary the magnitude of $||\beta_Y||_2$ across a range from 10 to 80.
         For additional clarifications, refer to the notes in Table \ref{tab:median_RMSE_1}.
\end{table}

\begin{table}
    \caption{Median squared error of different estimates under Design (c) setting}
    \label{tab:median_RMSE_3}
    \begin{center}
    \begin{tabular*}{\textwidth}{@{}lcccccccc@{}}
        \hline \hline
        \multirow{2}{*}{Method} & \multicolumn{2}{c}{Sp $\beta_Y$} & \multicolumn{2}{c}{Mod Sp $\beta_Y$} & \multicolumn{2}{c}{Har $\beta_Y$} & \multicolumn{2}{c}{Den $\beta_Y$} \\
        \cline{2-3} \cline{4-5} \cline{6-7} \cline{8-9}
        & Sp $\beta_D$ & Den $\beta_D$ & Sp $\beta_D$ & Den $\beta_D$ & Sp $\beta_D$ & Den $\beta_D$ & Sp $\beta_D$ & Den $\beta_D$ \\
        \hline
        Na\"{i}ve & 3.073 & 1.072 & 2.877 & 1.678 & 1.712 & 3.012 & 1.144 & 3.404 \\
        Regression & 0.449 & 0.166 & 0.445 & 0.404 & 0.424 & 1.214 & 0.386 & 1.805 \\ 
        IPW & 0.622 & 0.348 & 0.610 & 0.666 & 0.363 & 1.583 & 0.247 & 2.043 \\ 
        DML1 & 0.331 & 0.200 & 0.329 & 0.494 & 0.291 & 1.545 & 0.246 & 2.142 \\ 
        DML2 & 0.274 & 0.188 & 0.301 & 0.480 & 0.276 & 1.551 & 0.229 & 2.099 \\ 
        AML & 0.375 & 0.201 & 0.407 & 0.516 & 0.362 & 1.680 & 0.290 & 2.297 \\ 
        ARB & 0.182 & 0.097 & 0.204 & 0.251 & 0.159 & 0.980 & 0.134 & 1.359 \\ 
        DB1 & 0.270 & 0.172 & 0.284 & 0.411 & 0.254 & 1.419 & 0.201 & 1.955 \\ 
        DB2 & 0.247 & 0.157 & 0.270 & 0.419 & 0.246 & 1.380 & 0.203 & 1.913 \\ 
        DB6 & 0.112 & 0.073 & 0.117 & 0.216 & 0.110 & 0.790 & 0.093 & 1.195 \\   
        \hline
        \hline
    \end{tabular*}
    \end{center}
    \footnotesize
  \renewcommand{\baselineskip}{11pt}
  \textbf{Note:} In this study, we generate data under the prescribed settings: $n = 500$, $p = 800$, $\rho = 0.5$, and $n_{\text{sim}} = 1,000$. 
        Signal strengths are standardized such that $||\beta_D||_2 = 1$ and $||\beta_Y||_2 = 1$ for all models. 
        For additional clarifications, refer to the notes in Table \ref{tab:median_RMSE_1}.
\end{table}

Based on the results from Tables \ref{tab:median_RMSE_1} to \ref{tab:median_RMSE_3}, particularly Table \ref{tab:median_RMSE_2}, it is evident that the DML1 and DML2 estimators exhibit good performance in terms of median squared error. This observation with Table \ref{tab:RMSE_1} - \ref{tab:RMSE_3} suggests that while the DML method generally yields reliable results across most replications, it occasionally produces anomalously high estimates. Conversely, our proposed estimators perform well in both mean squared error and median squared error, indicative of their robustness and stability.

\subsection{Simulation Results Using More Folds on Cross-fitting}
In this section, we discuss the simulation outcomes for the three scenarios outlined in Section \ref{sec:mdesigns}, focusing on the performance of DML with an increased number of cross-fitting folds and our proposed estimator.

\begin{table}
    \caption{Mean-squared error for estimators with different splits under Design (a) setting}
    \label{tab:morefolds_RMSE_1}
    \begin{center}
    \begin{tabular*}{\textwidth}{@{}lcccccccc@{}}
      \hline\hline
      \multirow{3}{*}{Method}
        & \multicolumn{4}{c}{Sparse $\beta_D$}
        & \multicolumn{4}{c}{Dense $\beta_D$} \\
      \cline{2-5} \cline{6-9}
        & \multicolumn{2}{c}{$\|\beta_D\|_2=1$}
        & \multicolumn{2}{c}{$4$}
        & \multicolumn{2}{c}{$\|\beta_D\|_2 = 1$}
        & \multicolumn{2}{c}{$4$} \\
      \cline{2-3}\cline{4-5}\cline{6-7}\cline{8-9}
        & $\|\beta_Y\|_2=1$ & $4$
        & $1$ & $4$
        & $\|\beta_Y\|_2=1$ & $4$
        & $1$ & $4$ \\
        \hline
        DML1 & 0.169 & 0.245 & 1.550 & 0.532 & 0.201 & 0.336 & 0.454 & 0.850 \\ 
  DML2 & 0.104 & 0.162 & 0.179 & 0.237 & 0.156 & 0.199 & 0.399 & 0.314 \\ 
  DML3 & 0.083 & 0.127 & 0.209 & 0.182 & 0.100 & 0.154 & 0.193 & 0.396 \\ 
  DML4 & 0.073 & 0.115 & 0.125 & 0.167 & 0.096 & 0.141 & 0.234 & 0.244 \\ 
  DML5 & 0.075 & 0.112 & 0.139 & 0.151 & 0.086 & 0.136 & 0.226 & 0.387 \\ 
  DB1 & 0.114 & 0.173 & 0.193 & 0.266 & 0.118 & 0.176 & 0.152 & 0.230 \\ 
  DB2 & 0.083 & 0.128 & 0.132 & 0.202 & 0.092 & 0.135 & 0.108 & 0.177 \\ 
  DB3 & 0.083 & 0.124 & 0.133 & 0.202 & 0.094 & 0.141 & 0.115 & 0.185 \\ 
  DB4 & 0.085 & 0.134 & 0.144 & 0.214 & 0.103 & 0.155 & 0.123 & 0.199 \\ 
  DB5 & 0.091 & 0.141 & 0.155 & 0.221 & 0.109 & 0.172 & 0.132 & 0.219 \\ 
  DB6 & 0.065 & 0.094 & 0.107 & 0.147 & 0.067 & 0.092 & 0.078 & 0.115 \\ 
        \hline
        \hline
    \end{tabular*}
    \end{center}
    \footnotesize
  \renewcommand{\baselineskip}{11pt}
  \textbf{Note:} In this study, data are generated according to the defined parameters: sample size \( n = 500 \), number of predictors \( p = 800 \), correlation \( \rho = 0.5 \), and number of simulations \( n_{\text{sim}} = 1,000 \). We introduce several estimators: Double Machine Learning (DML) estimators are denoted as \(\textit{DML1}\), employing two-fold sample-splitting; \(\textit{DML2}\), using two-fold cross-fitting; and \(\textit{DML3}\), \(\textit{DML4}\), and \(\textit{DML5}\) for three-, four-, and five-fold cross-fitting, respectively. Our proposed debiasing technique, \(\textit{DB1}\), utilizes two-fold sample-splitting. This approach is extended in \(\textit{DB2}\) through \(\textit{DB5}\) to include two- to five-fold cross-fitting, respectively, while \(\textit{DB6}\) represents our method implemented without any splitting techniques.
\end{table}

\begin{table}
    \caption{Mean-squared error for estimators with different splits under Design (b) setting}
    \label{tab:morefolds_RMSE_2}
    \begin{center}
    \begin{tabular*}{\textwidth}{@{}lcccccccc@{}}
        \hline \hline
        \multirow{2}{*}{Method} & \multicolumn{8}{c}{$||\beta_Y||_2$} \\
        \cline{2-9}
        & $\quad 10 \quad $ & 
        $\quad 20 \quad $ & 
        $\quad 30 \quad $ & 
        $\quad 40 \quad $ & 
        $\quad 50 \quad $ & 
        $\quad 60 \quad $ &
        $\quad 70 \quad $ & 
        $\quad 80 \quad $ \\
        \hline
        DML1 & 428.684 & 16.163 & 41.339 & 2.611 & 1.078 & 5.376 & 2.560 & 0.763 \\ 
  DML2 & 0.963 & 4.942 & 129.389 & 1.894 & 2.354 & 24.046 & 1.071 & 0.830 \\ 
  DML3 & 2.239 & 8.006 & 44.118 & 0.764 & 1.996 & 0.714 & 0.653 & 5.191 \\ 
  DML4 & 0.313 & 0.404 & 0.770 & 1.654 & 0.716 & 1.472 & 0.736 & 0.441 \\ 
  DML5 & 0.887 & 0.620 & 0.910 & 0.795 & 2.031 & 3.520 & 0.466 & 1.211 \\ 
  DB1 & 0.274 & 0.327 & 0.366 & 0.402 & 0.402 & 0.421 & 0.467 & 0.442 \\ 
  DB2 & 0.195 & 0.243 & 0.256 & 0.295 & 0.290 & 0.296 & 0.326 & 0.317 \\ 
  DB3 & 0.209 & 0.267 & 0.300 & 0.337 & 0.337 & 0.349 & 0.379 & 0.367 \\ 
  DB4 & 0.227 & 0.301 & 0.342 & 0.366 & 0.382 & 0.391 & 0.421 & 0.411 \\ 
  DB5 & 0.241 & 0.332 & 0.380 & 0.410 & 0.418 & 0.425 & 0.462 & 0.461 \\ 
  DB6 & 0.154 & 0.202 & 0.217 & 0.240 & 0.239 & 0.241 & 0.268 & 0.260 \\ 
        \hline
        \hline
    \end{tabular*}
    \end{center}
    \footnotesize
  \renewcommand{\baselineskip}{11pt}
  \textbf{Note:} In this study, we generate data under the prescribed settings: $n = 500$, $p = 800$, $\rho = 0.5$, and $n_{\text{sim}} = 1,000$. 
         Signal strengths are standardized such that $||\beta_D||_2 = 1$ and we vary the magnitude of $||\beta_Y||_2$ across a range from 10 to 80.
         For additional clarifications, refer to the notes in Table \ref{tab:morefolds_RMSE_1}.
\end{table}

\begin{table}
    \caption{Mean-squared error for estimators with different splits under Design (c) setting}
    \label{tab:morefolds_RMSE_3}
    \begin{center}
    \begin{tabular*}{\textwidth}{@{}lcccccccc@{}}
        \hline \hline
        \multirow{2}{*}{Method} & \multicolumn{2}{c}{Sp $\beta_Y$} & \multicolumn{2}{c}{Mod Sp $\beta_Y$} & \multicolumn{2}{c}{Har $\beta_Y$} & \multicolumn{2}{c}{Den $\beta_Y$} \\
        \cline{2-3} \cline{4-5} \cline{6-7} \cline{8-9}
        & Sp $\beta_D$ & Den $\beta_D$ & Sp $\beta_D$ & Den $\beta_D$ & Sp $\beta_D$ & Den $\beta_D$ & Sp $\beta_D$ & Den $\beta_D$ \\
        \hline
        DML1 & 0.587 & 0.402 & 0.512 & 9.539 & 0.488 & 1.876 & 0.456 & 2.398 \\ 
  DML2 & 0.388 & 0.299 & 0.424 & 0.595 & 0.377 & 1.652 & 0.332 & 2.432 \\ 
  DML3 & 0.261 & 0.203 & 0.278 & 0.418 & 0.268 & 1.304 & 0.255 & 1.857 \\ 
  DML4 & 0.221 & 0.180 & 0.241 & 0.353 & 0.241 & 1.175 & 0.230 & 1.736 \\ 
  DML5 & 0.208 & 0.170 & 0.224 & 0.330 & 0.217 & 1.080 & 0.217 & 1.637 \\ 
  DB1 & 0.385 & 0.263 & 0.417 & 0.516 & 0.385 & 1.524 & 0.363 & 2.053 \\ 
  DB2 & 0.339 & 0.225 & 0.345 & 0.476 & 0.307 & 1.479 & 0.276 & 1.976 \\ 
  DB3 & 0.364 & 0.259 & 0.373 & 0.519 & 0.315 & 1.501 & 0.267 & 1.970 \\ 
  DB4 & 0.395 & 0.280 & 0.394 & 0.545 & 0.322 & 1.512 & 0.262 & 1.995 \\ 
  DB5 & 0.420 & 0.298 & 0.419 & 0.568 & 0.330 & 1.541 & 0.264 & 1.997 \\ 
  DB6 & 0.176 & 0.129 & 0.184 & 0.273 & 0.161 & 0.869 & 0.164 & 1.329 \\ 
        \hline
        \hline
    \end{tabular*}
    \end{center}
    \footnotesize
  \renewcommand{\baselineskip}{11pt}
  \textbf{Note:} In this study, we generate data under the prescribed settings: $n = 500$, $p = 800$, $\rho = 0.5$, and $n_{\text{sim}} = 1,000$. 
        Signal strengths are standardized such that $||\beta_D||_2 = 1$ and $||\beta_Y||_2 = 1$ for all models. 
        For additional clarifications, refer to the notes in Table \ref{tab:morefolds_RMSE_1}
\end{table}

Tables \ref{tab:morefolds_RMSE_1} to \ref{tab:morefolds_RMSE_3} reveal that increasing the number of folds in cross-fitting enhances both the performance and stability of the DML estimator, while adding more folds does not significantly benefit our proposed estimator. However, under Design (b) setting (Table \ref{tab:morefolds_RMSE_2}), even with four or five folds, the DML still exhibits some instability. Notably, our proposed method, which does not employ any splitting techniques, consistently outperforms the DML estimators across various settings.

\subsection{Simulation Results Using Different Values of the Tunning Parameter}\label{sec:diffzeta}
In this section, we report simulation results for the three settings in Section \ref{sec:mdesigns}, focusing on the performance of our proposed method without splitting (DB6).

\begin{table}
    \caption{Mean squared error of the proposed estimator with different tunning values under Design (a) setting}
    \label{tab:diffzeta_RMSE_1}
    \begin{center}
    \begin{tabular*}{\textwidth}{@{}lcccccccc@{}}
      \hline\hline
      \multirow{3}{*}{Method}
        & \multicolumn{4}{c}{Sparse $\beta_D$}
        & \multicolumn{4}{c}{Dense $\beta_D$} \\
      \cline{2-5} \cline{6-9}
        & \multicolumn{2}{c}{$\|\beta_D\|_2=1$}
        & \multicolumn{2}{c}{$4$}
        & \multicolumn{2}{c}{$\|\beta_D\|_2 = 1$}
        & \multicolumn{2}{c}{$4$} \\
      \cline{2-3}\cline{4-5}\cline{6-7}\cline{8-9}
        & $\|\beta_Y\|_2=1$ & $4$
        & $1$ & $4$
        & $\|\beta_Y\|_2=1$ & $4$
        & $1$ & $4$ \\
        \hline
        0.2 & 0.071 & 0.127 & 0.118 & 0.229 & 0.072 & 0.112 & 0.090 & 0.148 \\ 
  0.3 & 0.068 & 0.114 & 0.108 & 0.188 & 0.069 & 0.104 & 0.086 & 0.135 \\ 
  0.4 & 0.067 & 0.105 & 0.105 & 0.166 & 0.068 & 0.100 & 0.084 & 0.128 \\ 
  0.5 & 0.066 & 0.099 & 0.104 & 0.153 & 0.067 & 0.096 & 0.082 & 0.120 \\ 
  0.6 & 0.066 & 0.094 & 0.106 & 0.143 & 0.068 & 0.093 & 0.081 & 0.115 \\ 
  0.7 & 0.066 & 0.091 & 0.107 & 0.138 & 0.068 & 0.090 & 0.082 & 0.111 \\ 
  0.8 & 0.067 & 0.087 & 0.109 & 0.131 & 0.070 & 0.086 & 0.084 & 0.108 \\ 
        \hline
        \hline
    \end{tabular*}
    \end{center}
    \footnotesize
  \renewcommand{\baselineskip}{11pt}
  \textbf{Note:} In this study, we generate data under the prescribed settings: $n = 500$, $p = 800$, $\rho = 0.5$, and $n_{\text{sim}} = 1,000$. 
        We implement our proposed method without any splitting techniques by taking different values of the tuning parameter $\zeta$ in the optimization problem of Algorithm 1.
\end{table}

\begin{table}
    \caption{Mean squared error of the proposed estimator with different tunning values under Design (b) setting}
    \label{tab:diffzeta_RMSE_2}
    \begin{center}
    \begin{tabular*}{\textwidth}{@{}lcccccccc@{}}
        \hline \hline
        \multirow{2}{*}{Method} & \multicolumn{8}{c}{$||\beta_Y||_2$} \\
        \cline{2-9}
        & $\quad 10 \quad $ & 
        $\quad 20 \quad $ & 
        $\quad 30 \quad $ & 
        $\quad 40 \quad $ & 
        $\quad 50 \quad $ & 
        $\quad 60 \quad $ &
        $\quad 70 \quad $ & 
        $\quad 80 \quad $ \\
        \hline
        0.2 & 0.176 & 0.214 & 0.226 & 0.232 & 0.245 & 0.252 & 0.272 & 0.256 \\ 
  0.3 & 0.159 & 0.204 & 0.220 & 0.220 & 0.238 & 0.242 & 0.261 & 0.250 \\ 
  0.4 & 0.154 & 0.199 & 0.215 & 0.223 & 0.241 & 0.243 & 0.259 & 0.254 \\ 
  0.5 & 0.151 & 0.200 & 0.222 & 0.228 & 0.248 & 0.245 & 0.263 & 0.260 \\ 
  0.6 & 0.151 & 0.204 & 0.228 & 0.240 & 0.262 & 0.254 & 0.277 & 0.268 \\ 
  0.7 & 0.155 & 0.211 & 0.241 & 0.254 & 0.271 & 0.268 & 0.295 & 0.282 \\ 
  0.8 & 0.156 & 0.226 & 0.253 & 0.271 & 0.291 & 0.291 & 0.317 & 0.297 \\
        \hline
        \hline
    \end{tabular*}
    \end{center}
    \footnotesize
  \renewcommand{\baselineskip}{11pt}
  \textbf{Note:} In this study, we generate data under the prescribed settings: $n = 500$, $p = 800$, $\rho = 0.5$, and $n_{\text{sim}} = 1,000$. 
         Signal strengths are standardized such that $||\beta_D||_2 = 1$ and we vary the magnitude of $||\beta_Y||_2$ across a range from 10 to 80.
         For additional clarifications, refer to the notes in Table \ref{tab:diffzeta_RMSE_1}.
\end{table}

\begin{table}
    \caption{Mean squared error of the proposed estimator with different tunning values under Design (b) setting}
    \label{tab:diffzeta_RMSE_3}
    \begin{center}
    \begin{tabular*}{\textwidth}{@{}lcccccccc@{}}
        \hline \hline
        \multirow{2}{*}{Method} & \multicolumn{2}{c}{Sp $\beta_Y$} & \multicolumn{2}{c}{Mod Sp $\beta_Y$} & \multicolumn{2}{c}{Har $\beta_Y$} & \multicolumn{2}{c}{Den $\beta_Y$} \\
        \cline{2-3} \cline{4-5} \cline{6-7} \cline{8-9}
        & Sp $\beta_D$ & Den $\beta_D$ & Sp $\beta_D$ & Den $\beta_D$ & Sp $\beta_D$ & Den $\beta_D$ & Sp $\beta_D$ & Den $\beta_D$ \\
        \hline
        0.2 & 0.271 & 0.146 & 0.274 & 0.368 & 0.222 & 1.057 & 0.212 & 1.624 \\ 
  0.3 & 0.226 & 0.134 & 0.228 & 0.338 & 0.190 & 0.994 & 0.182 & 1.530 \\ 
  0.4 & 0.201 & 0.125 & 0.202 & 0.317 & 0.172 & 0.926 & 0.170 & 1.446 \\ 
  0.5 & 0.184 & 0.118 & 0.188 & 0.297 & 0.163 & 0.864 & 0.161 & 1.363 \\ 
  0.6 & 0.175 & 0.113 & 0.174 & 0.278 & 0.155 & 0.809 & 0.156 & 1.273 \\ 
  0.7 & 0.164 & 0.109 & 0.165 & 0.261 & 0.150 & 0.753 & 0.150 & 1.189 \\ 
  0.8 & 0.156 & 0.105 & 0.157 & 0.245 & 0.147 & 0.693 & 0.148 & 1.098 \\ 
        \hline
        \hline
    \end{tabular*}
    \end{center}
    \footnotesize
  \renewcommand{\baselineskip}{11pt}
  \textbf{Note:} In this study, we generate data under the prescribed settings: $n = 500$, $p = 800$, $\rho = 0.5$, and $n_{\text{sim}} = 1,000$. 
        Signal strengths are standardized such that $||\beta_D||_2 = 1$ and $||\beta_Y||_2 = 1$ for all models. 
        For additional clarifications, refer to the notes in Table \ref{tab:diffzeta_RMSE_1}.
\end{table}

Tables \ref{tab:diffzeta_RMSE_1}–\ref{tab:diffzeta_RMSE_3} demonstrate that, in most scenarios, increasing the tuning parameter 
$\zeta$ yields modest improvements in our estimator’s performance by tightening control over the worst-case bias (maximal covariate imbalance). This aligns with the intuition that, when the variance function is well behaved and does not inflate, it is advantageous to place greater emphasis on bias control by increasing $\zeta$. 
Crucially, across all values of 
$\zeta$, our proposed estimator is stable and maintains strong performance, outperforming the alternative methods.

\subsection{Simulation Results for the Confidence Intervals}
We assess the coverage of confidence intervals for an array of $n$ and $p$ values under the Design (a) setting of Section \ref{sec:mdesigns}. The confidence interval constructed based on Theorem 3.2:
\[(\hat{\tau} - 1.96\sqrt{\hat{v}_\tau},\ \hat{\tau} + 1.96\sqrt{\hat{v}_\tau}).\]
\begin{sidewaystable}
    \caption{Coverage of the confidence interval under Design (a) setting}
    \label{tab:CI_1}
    \begin{center}
    \begin{tabular*}{\textwidth}{@{}cccccccccccccc@{}}
        \hline \hline
        $n$ & $p$ & \multicolumn{6}{c}{Coverage rate (\%)} & \multicolumn{6}{c}{Length} \\
        \cline{3-8}  \cline{9-14}
        & & \multicolumn{3}{c}{Sparse $\beta_D$} & \multicolumn{3}{c}{Dense $\beta_D$} & \multicolumn{3}{c}{Sparse $\beta_D$} & \multicolumn{3}{c}{Dense $\beta_D$} \\
        \cline{3-5} \cline{6-8} \cline{9-11} \cline{12-14}

        & & DB1 & DB2 & DB6 & DB1 & DB2 & DB6 & DB1 & DB2 & DB6 & DB1 & DB2 & DB6 \\
        \hline
        500 & 600 & 0.933 & 0.932 & 0.939 & 0.911 & 0.913 & 0.931 & 0.253 & 0.252 & 0.201 & 0.230 & 0.230 & 0.185 \\
        1,000 & 600 & 0.944 & 0.944 & 0.953 & 0.932 & 0.919 & 0.945 & 0.182 & 0.182 & 0.144 & 0.163 & 0.163 & 0.132 \\
        500 & 1,200 & 0.928 & 0.929 & 0.945 & 0.912 & 0.910 & 0.932 & 0.252 & 0.252 & 0.200 & 0.229 & 0.229 & 0.183 \\
        1,000 & 1,200 & 0.942 & 0.942 & 0.953 & 0.916 & 0.926 & 0.936 & 0.181 & 0.181 & 0.144 & 0.162 & 0.162 & 0.131 \\
         \hline
        \hline
    \end{tabular*}
    \end{center}
  \footnotesize
  \renewcommand{\baselineskip}{11pt}
  \textbf{Note:} In this study, we generate data under the prescribed settings:  $\rho = 0.5$ and $n_{\text{sim}} = 2,000$. 
        Signal strengths are standardized such that $||\beta_D||_2 = 1$ and $||\beta_Y||_2 = 1$ for all models. The target coverage is 0.95.
        For additional clarifications, refer to the notes in Table \ref{tab:RMSE_1}.
\end{sidewaystable}

According to Table \ref{tab:CI_1}, the coverage rate tends to improve as the sample size $n$ grows. 
When the propensity score model is comparatively simple, our proposed methods exhibit stability and closely align with the target coverage rate.
Additionally, DB6, which does not utilize any splitting techniques, demonstrates greater efficiency concerning the length of the confidence interval compared to DB1 and DB2. 
Therefore, in practical applications, we recommend employing our proposed method without splitting.
\end{document}